\newtheorem{theorem}{Theorem}
\newtheorem{lemma}{Lemma}
\newtheorem{remark}{Remark}
\newtheorem{example}{Example}
\newtheorem{definition}{Definition}
\newtheorem{proposition}{Proposition}
\newtheorem{stopping criterion}{stopping criterion}
\newtheorem{problem}{Problem}
\newcommand{\beq}{\begin{equation}}
\newcommand{\eeq}{\end{equation}}
\newcommand{\beqnn}{\begin{equation*}}
\newcommand{\eeqnn}{\end{equation*}}
\newcommand{\beqy}{\begin{eqnarray}}
\newcommand{\eeqy}{\end{eqnarray}}
\newcommand{\beqynn}{\begin{eqnarray*}}
\newcommand{\eeqynn}{\end{eqnarray*}}
\newcommand{\bit}{\begin{itemize}}
\newcommand{\eit}{\end{itemize}}
\newcommand{\ben}{\begin{enumerate}}
\newcommand{\een}{\end{enumerate}}
\newcommand{\bex}{\begin{example}}
\newcommand{\eex}{\end{example}}
\newcommand{\balg}[1]{\begin{algorithm} \caption{#1}}
\newcommand{\ealg}{\end{algorithm}}
\newcommand{\balgc}{\begin{algorithmic}[1]}
\newcommand{\ealgc}{\end{algorithmic}}
\newcommand{\bary}{\begin{array}}
\newcommand{\eary}{\end{array}}
\newcommand{\bmx}{\begin{bmatrix}}
\newcommand{\emx}{\end{bmatrix}}
\newcommand{\bsmx}{\left[\begin{smallmatrix}}
\newcommand{\esmx}{\end{smallmatrix}\right]}
\newcommand{\bmxc}[1]{\left[\begin{array}{@{}#1@{}}}
\newcommand{\emxc}{\end{array}\right]}
\newcommand{\bcn}{\begin{center}}
\newcommand{\ecn}{\end{center}}
\newcommand{\Rbb}{{\mathbb{R}}}
\newcommand{\bigO}{{\mathcal{O}}}
\newcommand{\Rnbn}{\Rbb^{n \times n}}
\newcommand{\A}{\boldsymbol{A}}
\newcommand{\B}{\boldsymbol{B}}
\renewcommand{\H}{\boldsymbol{H}}
\newcommand{\I}{\boldsymbol{I}}
\renewcommand{\L}{\boldsymbol{L}}
\newcommand{\M}{\boldsymbol{M}}
\newcommand{\Q}{\boldsymbol{Q}}
\newcommand{\R}{\boldsymbol{R}}
\newcommand{\U}{\boldsymbol{U}}
\newcommand{\Z}{\boldsymbol{Z}}
\renewcommand{\a}{\boldsymbol{a}}
\renewcommand{\b}{\boldsymbol{b}}
\renewcommand{\d}{\boldsymbol{d}}
\renewcommand{\l}{\boldsymbol{l}}
\newcommand{\rr}{\boldsymbol{r}}
\newcommand{\s}{\boldsymbol{s}}
\renewcommand{\u}{\boldsymbol{u}}
\renewcommand{\v}{\boldsymbol{v}}
\newcommand{\x}{{\boldsymbol{x}}}
\newcommand{\0}{{\boldsymbol{0}}}
\newcommand{\br}{{\bar{r}}}
\newcommand{\bbR}{{\bar{\R}}}
\newcommand{\bxi}{\boldsymbol{\xi}}
\newcommand{\8}{{\infty}}
\newcommand{\4}{{\nabla}}
\newcommand{\norm}[1]{\left\lVert #1 \right\rVert} 
\DeclareMathOperator*{\argmin}{arg\,min}
\begin{document}

\title{An Efficient Optimal Algorithm for the Successive Minima Problem}

\author{Jinming~Wen, Lanping~Li, Xiaohu Tang, \IEEEmembership{Member, IEEE}, and Wai Ho Mow, \IEEEmembership{Senior Member, IEEE}
\thanks{Part of this work was presented at the  2017 IEEE International Conference on Communications (ICC), Paris, France.}
\thanks{J.~Wen is with  the College of Information Science and Technology, and the College of Cyber Security, Jinan University, Guangzhou 510632, China (e-mail: jinming.wen@mail.mcgill.ca).}
\thanks{L. Li and X. Tang are with the Information Security and National Computing Grid
Laboratory, Southwest Jiaotong University, Chengdu 610031, China (e-mail: lanping@my.swjtu.edu.cn, xhutang@swjtu.edu.cn).}
\thanks{W. H. Mow is with the Department of Electronic and Computer Engineering, Hong Kong University of Science and Technology, Hong Kong (e-mail: eewhmow@ust.hk).}
\thanks{
This work was partially supported by National Natural Science Foundation of China (No. 11871248),
``the Fundamental Research Funds for the Central Universities'' (No. 21618329),
the University Grants Committee of Hong Kong under Project AoE/E-02/08
and the Major Frontier Project of Sichuan Province under Grant 2015JY0282.}
}

\maketitle

\begin{abstract}
In many applications including integer-forcing linear multiple-input and multiple-output (MIMO)
receiver design, one needs to solve a  successive minima problem (SMP) on an $n$-dimensional lattice to
get an optimal integer coefficient matrix $\A^\star\in \mathbb{Z}^{n\times n}$.
In this paper, we first propose an  efficient optimal SMP  algorithm with  an $\bigO(n^2)$ memory complexity.
The main idea behind the new algorithm is it first initializes with a suitable suboptimal solution,
which is then updated, via a novel algorithm with only $\bigO(n^2)$ flops in each updating,
until $\A^{\star}$ is obtained.
Different from existing algorithms which find $\A^\star$ column by column
through using a sphere decoding search strategy $n$ times,
the new algorithm uses a search strategy once only.
We then rigorously prove the optimality of the proposed algorithm.
Furthermore, we theoretically analyze its complexity.
In particular, we not only show that the new algorithm is $\Omega(n)$ times faster than
the most efficient existing algorithm with polynomial memory complexity,
but also assert that it is even more efficient than the most efficient existing algorithm
with exponential memory complexity.
Finally, numerical simulations are presented to illustrate the optimality
and efficiency of our novel SMP algorithm.
\end{abstract}

\begin{IEEEkeywords}
Integer-forcing linear receiver, successive minima problem, sphere decoding, LLL reduction,
Schnorr-Euchner enumeration.
\end{IEEEkeywords}

\IEEEpeerreviewmaketitle
\section{Introduction}
\label{sec:Introduction}

\subsection{Successive Minima Problem}

Let $\H\in \mathbb{R}^{m\times n} (m\geq n)$ be an arbitrary full-column-rank matrix,
then the lattice $\mathcal{L}(\H)$ generated by $\H$ is defined by
$\mathcal{L}(\H)=\{\H\a|\a \in \mathbb{Z}^n\}$.
For $1\leq k\leq n$, the $k$-th successive minimum $\lambda_k(\H)$ of $\mathcal{L}(\H)$
is defined as the smallest $r$ such that the closed $n$-dimensional ball
$\mathbb{B}(\0,r)$ of radius $r$ centered at the origin contains $k$ linearly independent lattice vectors.

In many applications, such as integer-forcing (IF) linear multiple-input and multiple-output (MIMO)
receiver design  \cite{ZhaNEG10} \cite{ZhaNEG14}, we need to find an invertible \footnote{In this paper, an $n\times n$ invertible
matrix means this matrix is invertible over $\Rnbn$.} matrix
$\A^{\star}=[\a_1^{\star},\a_2^{\star},\cdots, \a_n^{\star}]\in \mathbb{Z}^{n\times n}$
such that $\|\H\a_i^{\star}\|_2$ are as small as possible for $1\leq i \leq n$.
By the definition of successive minima, finding such matrix $\A^{\star}$ is equivalent to solving
a successive minima problem (SMP) on lattice $\mathcal{L}(\H)$, which is mathematically defined
in the following:
\begin{definition}
\label{d:SMP}
SMP on lattice $\mathcal{L}(\H)$: finding an invertible matrix $\A^{\star}=[\a^{\star}_1,\ldots, \a^{\star}_{n}]\in \mathbb{Z}^{n\times n}$ such that
\[
\|\H\a^{\star}_i\|_2= \lambda_i(\H),\quad i=1,2,\ldots, n.
\]
\end{definition}

Note that IF linear MIMO receiver reaches the optimal diversity-multiplexing tradeoff for the
standard MIMO channel with no coding across transmit antennas in the high signal-to-noise ratio (SNR)
region \cite{ZhaNEG14}.
In addition to the IF linear receiver, there are many other efficient and effective
MIMO detection approaches, such as the likelihood ascent search detector whose complexity is $\bigO(mn)$ flops \cite{VarMCR08},
minimum mean-squared error iterative successive
parallel arbitrated decision feedback detectors whose exact complexity analysis was not given \cite{LamS08},
unified bit-based probabilistic data association detection approach
whose complexity is substantially lower than that of the conventional symbol-based
probabilistic data association detectors in uncoded V-BLAST systems using high-order QAM \cite{YanLMH11},
energy spreading transform approach whose complexity is $\bigO(mn)$ flops \cite{HwaKP12},
decision-feedback-based algorithm whose complexity is $\bigO(mn^2)$ flops \cite{LiL12},
adaptive and iterative multi-branch minimum mean-squared
error decision feedback detection algorithms whose complexity is $\bigO(mn^2)$ flops \cite{Lam13},
and iterative detection and decoding algorithms for low-density parity-check codes
whose exact complexity analysis was not given \cite{UchHL16}.
Different from the IF linear receiver, which decodes integer combinations of the transmitted codewords
based on the fact that any integer linear combination of lattice points is still a lattice point,
these detectors decode the transmitted codewords individually.
When the channel matrix is near singular, the IF linear receiver may have better detection performance
than these approaches.

In addition to the IF linear receiver design, solving an SMP is needed in some other applications,
such as physical-layer network coding framework design \cite{FenSK13} and compute-compress-and-forward relay strategy design \cite{TanY16}.
Motivated by these applications, this paper focuses on developing an efficient algorithm for
optimally solving the SMP and analyzing its complexity.

\subsection{Related works}

There are several optimal SMP algorithms \cite{WeiC12a,MejO13,FenSK13, DinKWZ15, FisCS16}.
For conciseness, we briefly introduce the main ideas of the algorithms
in the recent two papers \cite{DinKWZ15, FisCS16} only.
There are two SMP algorithms in \cite{DinKWZ15} which are respectively for solving SMP's on real and complex lattices. For conciseness, we introduce the algorithm for the real SMP only.
For efficiency, this algorithm first utilizes the  Lenstra-Lenstra-Lov\'asz (LLL) reduction \cite{LenLL82}
to preprocess the SMP.
Then, as in \cite{FenSK13}, it finds the transformed $\A^\star$  column by column,
by an improved Schnorr-Euchner search algorithm \cite{SchE94} which is a widely used sphere decoding search algorithm, in $n$ iterations.
Finally, this solution is left multiplied  with the unimodular matrix generated by the LLL reduction to give $\A^\star$.
The first and last steps of the algorithm in \cite{FisCS16} are the same as those of the algorithm
in \cite{DinKWZ15}. However, its second step is different.
Specifically, it first creates a matrix $\M$,
which stores a list of sorted (in an nondecreasing order according to their lengths) lattice vectors
with lengths bounded by the largest length of all the column vectors of the
LLL reduced matrix of $\H$.
These vectors are obtained by employing the Alg. ALLCLOSESTPOINTS in \cite{AgrEVZ02}.
Then $\M$ is transformed into the row echelon form by the Gaussian elimination.
Finally, the first $n$ linearly independent columns of $\M$ are chosen to
form the transformed $\A^\star$.
Although simulations in this paper will show that the latter is much more efficient than
the former, different from the former, its memory complexity is
an exponential function in $n$.
Thus, a more efficient algorithm with polynomial memory complexity is still desirable.

There are also some suboptimal SMP algorithms, such as the slowest descent method
\cite{WeiC13} and lattice reduction based algorithms \cite{SakHV13, LyuL17}.

\subsection{Our Contributions}

In this paper, we develop an efficient optimal SMP algorithm for $\A^{\star}$.
Specifically, the contributions of this paper are summarized as follows:
\begin{itemize}
\item An efficient optimal  algorithm for the SMP is proposed
\footnote{This part was presented at the  2017 IEEE International Conference on Communications (ICC)~\cite{WenLTMT17b}, but the  strategy for updating the suboptimal solution
is further improved in this version.}.
Like existing ones, for efficiency, our algorithm first uses the LLL reduction \cite{LenLL82}
to preprocess the SMP by reducing $\H$.
Then, unlike \cite{DinKWZ15}, which forms the solution of the transformed SMP column by column in $n$ iterations,
our new algorithm initializes with a suboptimal solution which is an $n\times n$
permutation matrix such that it is a fairly  good initial solution of the transformed SMP.
The suboptimal  solution is then updated by a novel algorithm which uses  the improved Schnorr-Euchner search algorithm in \cite{WenC17b}
to search for candidates of the columns of the solution of the transformed SMP,
and uses a novel and efficient algorithm to update it.
The updating process continues until the optimal solution is obtained.
Finally, the solution of the transformed SMP is left multiplied  with the unimodular matrix generated by the LLL reduction to give the optimal solution $\A^\star$ (see Section \ref{S:PM}).

\item We theoretically show that the memory complexity and the expected time complexity of
our new algorithm are respectively $\bigO(n^2)$ space and
$\bigO(n^{3/2}(2\pi e\,m/n)^{n/2})$ flops (see Section \ref{SS:compAnaA2}).
\item We show that the new algorithm
is $\Omega(n)$ \footnote{ Here $\Omega(n)$ is the standard big omega notation; see for example https://en.wikipedia.org/wiki/Big-O-notation
for its detailed defintion.}
times faster than the algorithm in \cite{DinKWZ15},
which is the most efficient existing algorithm with polynomial
memory complexity (see Section \ref{SS:compD}).
We also assert that it is faster than the one in \cite{FisCS16}
whose memory complexity is exponential in $n$ (see Section \ref{SS:compF}).

\item Numerical simulations not only verify the
improvements as predicted from the above theoretical findings
but also show that the proposed optimal algorithm  is more efficient than the
Minkowski reduction based suboptimal algorithm for solving the SMP (see Section \ref{S:Sim}).
\end{itemize}

The rest of the paper is organized as follows.
We propose our new SMP optimal algorithm and show its optimality in Section \ref{S:PM}.
We analyze its space and time complexities in Section \ref{S:compAna}.
Simulation results are provided in Section \ref{S:Sim} to show the efficiency and superiority
of the proposed algorithm.
Finally, conclusions are given in Section \ref{S:Conclusions}.

{\it Notation.}
Let $\mathbb{R}^{m\times n}$ and $\mathbb{Z}^{m\times n}$ respectively
stand for the spaces of the $m\times n$ real and integer matrices.
Let $\mathbb{R}^n$ and $\mathbb{Z}^n$ denote the spaces of the $n$-dimensional  real
and integer column vectors, respectively.
Matrices and column vectors are respectively denoted by uppercase and lowercase letters.
For a matrix $\A$, let $a_{ij}$ denote its element at row  $i$ and column  $j$,
$\a_i$ denote its $i$-th column, $\a_{k,i:j}$ be the vector formed by $a_{ki},a_{k,i+1}, \ldots, a_{kj}$
and $\A_{i:j}$ be the submatrix of $\A$ formed by columns from $i$ to $j$.
For a vector $\x$, let $x_i$ be its $i$-th element
and $\x_{i:j}$ be the subvector of $\x$ formed by entries $i, i+1, \ldots,j$.
For a number $x$, let $\lfloor x\rceil$ denote its nearest integer
(if there is a tie, the one with smaller magnitude is chosen).

\section{A New SMP Algorithm}
\label{S:PM}

In this section, we propose an efficient algorithm for exactly solving the SMP
and rigorously show its optimality.

The main novelties of our new algorithm are: one the one hand, unlike \cite{DinKWZ15},
which needs to use an improved Schnorr-Euchner search algorithm  $n$ times,
it uses the improved Schnorr-Euchner search algorithm \cite{WenC17b} once only.
On the other hand, the complexity of the novel algorithm for updating the suboptimal solution is
$\bigO(n^2)$ flops only. As will be explained in details in Section \ref{SS:Preliminary},
in comparison, a straightforward updating algorithm costs $\bigO(n^4)$ flops.
Because of these two novelties, the efficiency of the new algorithm is significantly improved.

\subsection{Preprocessing of the SMP}

For efficiency, one typically uses the LLL reduction \cite{LenLL82} to preprocess the SMP.
Let $\H$  have the following thin QR factorization whose algorithms can be found in many references (see, e.g., \cite{GolV13}):
\beq
\label{e:QR}
\H=\Q\R,
\eeq
where $\Q\in \mathbb{R}^{m\times n}$ is a matrix with orthonormal columns,
$\R \in \mathbb{R}^{n\times n}$ is an upper triangular matrix.
Recall that we assume $\H$ is a full column matrix, so $\R$ is full-rank.

After obtaining $\R$, the LLL reduction  reduces $\R$ in \eqref{e:QR} to $\bbR$ through
\beq
\label{e:QRZ}
\bar{\Q}^T \R \Z = \bbR,
\eeq
where $\bar{\Q}  \in \mathbb{R}^{n \times n}$ is orthogonal,
$\Z\in   \mathbb{Z}^{n \times n}$ is  unimodular (i.e., $\Z$ also satisfies $|\det(\Z)|=1$),
and $\bbR\in \mathbb{R}^{n \times n}$ is an upper triangular matrix satisfying
\begin{align*}
&|\br_{ik}|\leq\frac{1}{2} |\br_{ii}|, \quad i=1, 2, \ldots, k-1,  \\
&\delta\, \br_{k-1,k-1}^2 \leq   \br_{k-1,k}^2+ \br_{kk}^2,\quad k=2, 3, \ldots, n,
\end{align*}
where $\delta$ is a constant satisfying $1/4 < \delta \leq 1$.
The matrix $\bbR$ is said to be  LLL reduced.
The LLL reduction algorithm can be found in \cite{LenLL82} and its properties in MIMO communications have been studied in \cite{ChaWX13,WenTB16,WenC17}.

By \eqref{e:QR} and \eqref{e:QRZ}, we have
$
\H=\Q\bar{\Q}\bar{\R}\Z^{-1} .
$
Since the columns of $\Q$ are orthonormal and $\bar{\Q}$ is orthogonal, we have
\[
\|\H\a^{\star}_i\|_2=\|\bbR\Z^{-1}\a^{\star}_i\|_2, \quad 1\leq i\leq n.
\]
By the definition of successive minima, we also have
\[
\lambda_i(\R)=\lambda_i(\bbR), \quad 1\leq i\leq n.
\]
Thus, by Definition \ref{d:SMP}, the SMP on lattice $\mathcal{L}(\H)$ can be transformed to
the SMP on lattice $\mathcal{L}(\bbR)$, i.e., Problem \ref{P:RSMP} below:
\begin{problem}
[SMP on lattice $\mathcal{L}(\bbR)$]\label{P:RSMP}
finding an invertible integer matrix
$\B^{\star}=[\b^{\star}_1,\ldots, \b^{\star}_n]\in \mathbb{Z}^{n \times n}$ such that
\[
\|\bbR\b^{\star}_i\|_2= \lambda_i(\bbR),\quad 1\leq i\leq n.
\]
\end{problem}

Furthermore, if $\B^{\star}$ is a solution to Problem \ref{P:RSMP}, then $\A^{\star}=\Z\B^{\star}$
is a solution to the SMP on lattice $\mathcal{L}(\H)$, which is defined in Definition \ref{d:SMP}.
Moreover, according to the definition of successive minima,  the solution $\B^{\star}$ of
Problem \ref{P:RSMP} satisfies
\beq
\label{e:SMorder}
\|\bbR\b^{\star}_1\|_2\leq \|\bbR\b^{\star}_2\|_2\leq \cdots\leq \|\bbR\b^{\star}_n\|_2.
\eeq

Note that the reason for transforming the SMP on lattice $\mathcal{L}(\H)$ to the SMP on lattice
$\mathcal{L}(\bbR)$ (i.e., Problem \ref{P:RSMP}) is that the latter can be solved much more efficiently
than the former due to the fact that $\bbR$ is LLL reduced.

\subsection{Updating strategy for the novel algorithm}
\label{SS:Preliminary}

The main idea behind the proposed algorithm for Problem \ref{P:RSMP} is as follows:
we start with a suboptimal solution $\B$ which is the $n\times n$ permutation matrix such that
\beq
\label{e:ordernormR}
\|\bbR\b_{1}\|_2\leq \|\bbR\b_{2}\|_2\leq \ldots \leq\|\bbR\b_{n}\|_2.
\eeq
Then, we update the suboptimal solution.
More specifically, we use the improved Schnorr-Euchner search algorithm in \cite{WenC17b}
to find a nonzero integer vector $\b$ satisfying $\|\bbR\b\|_2<\|\bbR\b_{n}\|_2$.
Then, we use it to update $\B$ and then go to the next updating.
More specifically, we use $\b$ and $\B$ to get another suboptimal solution (i.e., an invertible matrix)
denoting by $\bar{\B}$, whose columns are chosen from $\b$ and the columns of $\B$ such that
$\|\bbR\bar{\b}_{i}\|_2$ are as small as possible for $1\leq i\leq n$
and
\beq
\label{e:ordernormRbar}
\|\bbR\bar{\b}_{1}\|_2\leq \|\bbR\bar{\b}_{2}\|_2\leq \ldots \leq\|\bbR\bar{\b}_{n}\|_2.
\eeq
The updating process continues with the process of the improved Schnorr-Euchner enumeration
until the suboptimal solution cannot be updated anymore, and the final solution is $\B^{\star}$.
Note that to ensure the suboptimal solution can be updated to the optimal solution $\B^{\star}$
which satisfies \eqref{e:SMorder},
$\B$ should satisfy \eqref{e:ordernormR} and $\bar{\B}$ should satisfy \eqref{e:ordernormRbar}
through the whole updating process.

From the above analysis, we can see that the updating process is  equivalent to solving Problem \ref{P:problem} below.
\begin{problem}
\label{P:problem}
For a given full-rank matrix $\B\in \mathbb{Z}^{n\times n}$ and a nonzero vector $\b\in\mathbb{Z}^{n}$
which satisfy
\beq
\label{e:ordernorm}
r_1\leq r_2\leq \ldots \leq r_n\mbox{ and } \alpha<r_n,
\eeq
where we denote
\beq
\label{e:norm}
r_k=\|\bbR\b_{k}\|_2,\,\; 1\leq k\leq n, \mbox{ and }\alpha= \|\bbR\b\|_2;
\eeq
find an invertible matrix $\bar{\B}\in \mathbb{Z}^{n\times n}$
whose column vectors are chosen from $\b$ and columns of $\B$ such that
\beq
\label{e:ordernormBbar}
\bar{r}_1\leq \bar{r}_2\leq \ldots \leq \bar{r}_n,
\eeq
and $\bar{r}_i$ are as small as possible for $1\leq i\leq n$,
where we denote
\beq
\label{e:normbar}
\bar{r}_k=\|\bbR\bar{\b}_{k}\|_2,\,\; 1\leq k\leq n.
\eeq
\end{problem}

In the following, we will propose an algorithm with at most $5n^2-2n-1$ flops
(i.e., the summation of the numbers of addition, subtraction, multiplication and division)
to solve Problem \ref{P:problem}.
In comparison, we will explain in detail that a straightforward method for Problem \ref{P:problem}
costs $\bigO(n^4)$ flops. Before giving the details, we need to show the problem is well-defined, i.e., showing the following proposition.

\begin{proposition}
\label{P:welldefine}
Problem \ref{P:problem} is solvable.
\end{proposition}

To prove Proposition \ref{P:welldefine}, we need to introduce the following theorem:

\begin{theorem}
\label{t:iindep}
Let $\B\in \mathbb{Z}^{n\times n}$ be an arbitrary full-rank integer matrix and $\b\in\mathbb{Z}^{n}$ be an arbitrary nonzero integer vector
such that $\widetilde{\B}_{1:i}$ is full column rank for some  $i$ with $1\leq i\leq n$, where
\beq
\label{e:Btilde}
\widetilde{\B}=
\bmx
\b_1&\ldots&\b_{i-1}&\b&\b_{i}&\ldots&\b_{n}
\emx.
\eeq
Then there exists at least one $j$ with $i+1\leq j\leq n+1$
such that $\widetilde{\B}_{[\setminus j]}$ is also full-rank,
where $\widetilde{\B}_{[\setminus j]}$ is the matrix obtained by removing the $j$-th column of $\widetilde{\B}$.
\end{theorem}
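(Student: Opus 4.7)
The plan is to argue by the Steinitz exchange lemma applied to the columns of $\widetilde{\C}$. Since $\C$ is invertible, its columns $\c_1,\ldots,\c_n$ form a basis of $\mathbb{R}^n$, so I can write
\beq
\c = \sum_{k=1}^{n}\alpha_k\,\c_k
\eeq
for unique scalars $\alpha_1,\ldots,\alpha_n$. The hypothesis that $\widetilde{\C}_{[1,i+1]}=[\c_1,\ldots,\c_i,\c]$ has full column rank is equivalent to saying $\c\notin\mathrm{span}(\c_1,\ldots,\c_i)$, which forces at least one coefficient among $\alpha_{i+1},\ldots,\alpha_n$ to be nonzero.

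Next I would pick any index $j'\in\{i+1,\ldots,n\}$ with $\alpha_{j'}\neq 0$ and set $j=j'+1\in\{i+2,\ldots,n+1\}$, which is the corresponding column index of $\c_{j'}$ inside $\widetilde{\C}$. I then claim that $\widetilde{\C}_{[\setminus j]}$ is invertible. To show this, I would solve the linear relation above for $\c_{j'}$, using $\alpha_{j'}\neq 0$, to obtain
\beq
\c_{j'} = \frac{1}{\alpha_{j'}}\Big(\c-\sum_{k\neq j'}\alpha_k\,\c_k\Big),
\eeq
so $\c_{j'}$ lies in the span of the remaining columns of $\widetilde{\C}_{[\setminus j]}$ (namely $\c_1,\ldots,\c_i,\c,\c_{i+1},\ldots,\c_{j'-1},\c_{j'+1},\ldots,\c_n$). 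Consequently the column space of $\widetilde{\C}_{[\setminus j]}$ contains every $\c_1,\ldots,\c_n$, hence equals $\mathbb{R}^n$. Since $\widetilde{\C}_{[\setminus j]}$ is $n\times n$ and its columns span $\mathbb{R}^n$, it is invertible, which is exactly the conclusion.

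There is no real obstacle here beyond bookkeeping: the argument is essentially a one-step basis-exchange, and the only subtlety is tracking the shift between the indexing of $\{\c_1,\ldots,\c_n\}$ inside $\C$ and the indexing of $\{\widetilde{\c}_1,\ldots,\widetilde{\c}_{n+1}\}$ inside $\widetilde{\C}$ caused by inserting $\c$ at position $i+1$. I would make this shift explicit once (namely $j=j'+1$) and then the range constraint $i+2\leq j\leq n+1$ in the statement of the theorem drops out automatically from $i+1\leq j'\leq n$.
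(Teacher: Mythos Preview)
Your proof is correct and follows essentially the same approach as the paper: both expand $\c$ in the basis $\c_1,\ldots,\c_n$, use the full column rank of $\widetilde{\C}_{[1,i+1]}$ to locate a nonzero coefficient with index at least $i+1$, and then perform a one-step basis exchange. The only cosmetic difference is that the paper verifies invertibility of $\widetilde{\C}_{[\setminus j]}$ by writing it explicitly as $\C$ times a matrix of determinant $f_{j-1}\neq 0$, whereas you argue via spanning; both are equivalent realizations of the Steinitz exchange.
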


\begin{proof}
Since $\B$ is full-rank, there exist $f_i\in \mathbb{R}, 1\leq i \leq n$, such that
\[
\b=f_1\b_1+\ldots +f_n\b_{n}.
\]
Since $\widetilde{\B}_{1:i}$ is full column rank, there exists at least one
$j$ with $i+1\leq j\leq n+1$ such that $f_{j-1}\neq 0$. Otherwise, by the aforementioned equation,
$\b$ is a linear combination of $\b_1,\ldots, \b_{i-1}$ which implies that
$\widetilde{\B}_{1:i}$ is not full column rank, contradicting the assumption.
Then, we have
\begin{align*}
      &\begin{pmatrix}
       \b_1 & \b_2&\ldots &  \b_{j-2}& \b & \b_{j} & \ldots & \b_{n}
     \end{pmatrix}\\
     =&\quad\B
    \begin{pmatrix}
      1 &        &   & f_{1}  &   &   &   \\
        & \ddots &   & \vdots   &   &   &   \\
        &        & 1 & f_{j-2}  &   &   &   \\
        &        &   & f_{j-1}    &   &   &   \\
        &        &   & f_{j} & 1 &   &   \\
        &        &   &\vdots   &   & \ddots &   \\
        &        &   & f_n     &   &   & 1 \\
    \end{pmatrix}
\end{align*}
which implies that
\[
\begin{pmatrix}
       \b_1 & \b_2&\ldots &  \b_{j-2}& \b  & \b_{j} & \ldots & \b_{n}
     \end{pmatrix}
\]
is full-rank. Thus, $\widetilde{\B}_{[\setminus j]}$ is full-rank.
\end{proof}

\begin{remark}
\label{co:iindep}
Note that if $i=1$, $\widetilde{\B}_{1:i}$ reduces to $\b$ which is full column rank
(since $\b\neq \0$), and Theorem \ref{t:iindep} reduces to:

Let $\B\in \mathbb{Z}^{n\times n}$ be an arbitrary full-rank integer matrix and $\b\in\mathbb{Z}^{n}$
be an arbitrary nonzero integer vector, then there exists at least one $j$ with $2\leq j\leq n+1$
such that  the matrix obtained by removing the $j$-th column of
$
\bmx
\b&\b_1&\ldots&\b_{n}
\emx
$
is also full-rank.
\end{remark}

In the following, we prove Proposition \ref{P:welldefine}.

\begin{proof}
By \eqref{e:ordernorm} and \eqref{e:norm}, one can see that there exists an $i$
with $1\leq i \leq n$ such that $r_{i-1}\leq \alpha<r_{i}$
(note that if $i=1$, it means $\alpha<r_{1}$). Let
\beq
\label{e:rtilde}
\widetilde{\rr}=
\bmx
r_1&\ldots&r_{i-1}&\alpha&r_{i}&\ldots& r_{n}
\emx.
\eeq
Then, one can see that
\beq
\label{e:ordernormBtilde}
\widetilde{r}_{1}\leq \widetilde{r}_{2}\leq \ldots \leq\widetilde{r}_{n+1}.
\eeq

Hence, solving Problem \ref{P:problem} is equivalent to finding the largest $j$ with $1\leq j\leq n+1$ such that $\widetilde{\B}_{[\setminus j]}$ is full-rank, where $\widetilde{\B}$ is defined in \eqref{e:Btilde}.
Specifically, after finding $j$, set $\bar{\B}=\widetilde{\B}_{[\setminus j]}$
and $\bar{\rr}=\widetilde{\rr}_{[\setminus j]}$, where $\widetilde{\rr}_{[\setminus j]}$
is the vector obtained by removing the $j$-th entry from $\widetilde{\rr}$.
Then, by \eqref{e:norm}, \eqref{e:Btilde} and \eqref{e:rtilde}, one can see that \eqref{e:normbar} holds.
Moreover, by \eqref{e:ordernormBtilde}, one can see that $\bar{r}_k$ are as small as possible
for $1\leq k \leq n$, and  by \eqref{e:rtilde} and \eqref{e:ordernormBtilde}, \eqref{e:ordernormBbar} holds.

If $\widetilde{\B}_{1:i}$ is not full column rank, then $j=i$, i.e., $\b$ should be removed from $\widetilde{\B}$, and the resulting matrix is $\B$ which is full-rank by assumption.
This is because, no matter which $\widetilde{\b}_j$ is removed for $i+1\leq j\leq n+1$,
the resulting matrix contains $\widetilde{\B}_{1:i}$ as a submatrix,
and hence it is not full-rank.
On the other hand, if $\widetilde{\b}_j$ is removed for $1\leq j\leq i-1$, then it is not
the column with the largest index needs to be removed.

If $\widetilde{\B}_{1:i}$ is full column rank, then by Theorem \ref{t:iindep},
there exists at least one $j$ with $i+1\leq j\leq n+1$ such that $\widetilde{\B}_{[\setminus j]}$
is full-rank.

Thus, there exists a $j$ with $i\leq j\leq n+1$ such that $\widetilde{\B}_{[\setminus j]}$ is full-rank
no matter whether $\widetilde{\B}_{1:i}$ is full column rank or not.
Therefore, Problem \ref{P:problem} is solvable.
\end{proof}

By the above proof, a natural method to find the desired $j$ is to check whether
$\widetilde{\B}_{[\setminus k]}$ is full-rank for $k=n+1, n,\ldots, i+1$
until finding an invertible matrix $\widetilde{\B}_{[\setminus j]}$ if it exists.
Otherwise, i.e., $\widetilde{\B}_{[\setminus k]}$ is not full-rank for $k=n+1, n,\ldots, i+1$,
then by the above analysis, $j=i$.
Clearly, this approach works, but the main drawback of this method is that its worst complexity
is $\bigO(n^4)$ flops which is too high.
Concretely, in the worst case, i.e., if $i=1$ and
$\widetilde{\B}_{[\setminus j]}$ are not full-rank for $2\leq j\leq n+1$,
then $n$ matrices need to be checked.
Since the complexity of checking whether an $n\times n$ matrix is full-rank or not is
$\bigO(n^3)$ flops, the whole complexity is $\bigO(n^4)$ flops.

In the following, we propose a method which solves Problem \ref{P:problem}
in $\bigO(n^2)$ flops, under the assumption that $\B$ has the LU factorization
$\L\B=\U$ with given $\L$ and $\U$, by using an updating LU factorization algorithm.
Specifically, we have the following theorem:
\begin{theorem}
\label{t:alg}
Let $\B$ be defined in Problem \ref{P:problem}, suppose that $\B$ has the following LU factorization:
\beq
\label{e:LU}
\L\B=\U,
\eeq
where full-rank matrix $\L\in \mathbb{R}^{n\times n}$ is a product of lower triangular
and permutation matrices
and $\U\in \mathbb{R}^{n\times n}$ is an invertible upper triangular matrix.
Then there exists an algorithm with  at most $5n^2-2n-1$ flops to find $\bar{\B}$
and $\bar{\rr}$ which satisfy the requirement of Problem \ref{P:problem},
and full-rank matrices $\bar{\L}\in \mathbb{R}^{n\times n}$ and $\bar{\U}\in \mathbb{R}^{n\times n}$,
which are respectively a product of lower triangular and permutation matrices and
upper triangular matrix, such that
\beq
\label{e:LUbar}
\bar{\L}\bar{\B}=\bar{\U}.
\eeq
\end{theorem}

Before proving Theorem \ref{t:alg}, we make a comment.
\begin{remark}
Theorem \ref{t:alg} shows that Problem \ref{P:problem} can be solved by an algorithm with  at most
$5n^2-2n-1$ flops under the condition that the LU factorization of $\B$ (see \eqref{e:LU}) is given
(note that the LU factorization of $\B$ is usually expressed as $\B=\L\U$,
but as can be seen from the following proof, it is better to use $\L\B=\U$).
Recall that the initial $\B$ of our new algorithm for Problem \ref{P:RSMP} is a permutation matrix,
(see the first paragraph of this subsection),
so \eqref{e:LU} holds  by setting $\L=\B$ and $\U=\I$.
Furthermore, Theorem \ref{t:alg} shows that, in addition to returning $\bar{\B}$ and $\bar{\rr}$
as required in Problem \ref{P:problem}, the algorithm also
returns $\bar{\L}$ and $\bar{\U}$ such that \eqref{e:LUbar} holds,
thus assuming that we have the LU factorization of $\B$ is not a problem.
\end{remark}

\begin{proof}
In the following, we propose an algorithm to find $\bar{\B}$, $\bar{\rr}$,
$\bar{\L}$ and $\bar{\U}$ satisfying the requirement of Theorem \ref{t:alg},
and then show its complexity is  at most  $5n^2-2n-1$ flops.

Our method for Problem \ref{P:problem} consists of three steps.
Firstly, we find $i$ with $r_{i-1}\leq \alpha<r_{i}$
such that \eqref{e:rtilde} and \eqref{e:ordernormBtilde} hold
to form $\widetilde{\B}$ (see \eqref{e:Btilde}).
Then, we find the largest $j$  such that
$\widetilde{\B}_{[\setminus j]}$ is full-rank.
Finally, we get $\bar{\B}$ by setting $\bar{\B}=\widetilde{\B}_{[\setminus j]}$,
obtain $\bar{\rr}$ by setting  $\bar{\rr}=\widetilde{\rr}_{[\setminus j]}$
and update $\L$ and $\U$ to $\bar{\L}$ and $\bar{\U}$, respectively.

The first step is trivial, so in the following, we consider the second step.
By \eqref{e:LU}, we have
\beq
\label{e:Utilde}
\L\bmx \B&\b\emx=\bmx \U&\L\b\emx:=\tilde{\U},
\eeq
where we denote $\tilde{\U}=\bmx \U&\L\b\emx\in \mathbb{R}^{n\times(n+1)}$.
Since $\L$ is an invertible matrix and $j\geq i$ (see the proof of Proposition \ref{P:welldefine}),
finding the desired $j$ is equivalent to finding the largest $j$ with $i\leq j\leq n$
such that $\widetilde{\U}_{[\setminus j]}$ is full-rank if it exists.
Specifically, by the proof of Proposition \ref{P:welldefine},
if there exist $i\leq j\leq n$ such that $\widetilde{\U}_{[\setminus j]}$
is full-rank, then the largest $j$ is the one we need; otherwise,
the last column of $\bmx \B&\b\emx$ should be removed to form $\bar{\B}$.
Since $\tilde{\U}_{1:n}$ is an upper triangular matrix, by \eqref{e:Utilde},
to find the desired $j$, we only need to check whether $\tilde{u}_{k,n+1}\neq0$
for $k=n,n-1,\ldots,i$ until finding one $j$ such that $\tilde{u}_{j,n+1}\neq0$ if it exists.
Otherwise, i.e., $\tilde{u}_{k,n+1}=0$ for $k=n,n-1,\ldots,i$,
then we set $\bar{\B}=\B$, $\bar{\rr}=\rr$, $\bar{\L}=\L$ and $\bar{\U}=\U$.

In the following, we introduce the last step.
If $\tilde{u}_{k,n+1}=0$ for $k=n,n-1,\ldots,i$, then by the above analysis,
$\bar{\B}=\B$, $\bar{\rr}=\rr$, $\bar{\L}=\L$ and $\bar{\U}=\U$.
Otherwise,  there exist at least one $i<j\leq n$ such that $\tilde{u}_{j,n+1}\neq0$.
Then we set $\bar{\B}=\widetilde{\B}_{[\setminus j]}$
and $\bar{\rr}=\widetilde{\rr}_{[\setminus j]}$.
To get $\bar{\L}$, $\bar{\U}$, we first update $\tilde{\U}$ to $\hat{\U}$ by setting
\beqnn
\hat{\U}=\bmx \U_{1:i-1}&\L\b& \U_{i:j-1}&\U_{j+1:n}\emx.
\eeqnn
Then, we use elementary transformations of matrices to bring $\hat{\U}$
back to an upper triangular matrix by transforming $\hat{u}_{ki}=0$
for $k=j,\ldots, i+1$ to get $\bar{\U}$.
Meanwhile, we use the same elementary transformations to update
$\L$ to get $\bar{\L}$.

To make readers implement the above algorithm easily, we describe the pseudocode of the above algorithm
in Alg. \ref{a:updatingC}.

\begin{algorithm}
\caption{An efficient updating LU factorization algorithm for updating $\B$}  \label{a:updatingC}
\textbf{Input:} An invertible matrix $\B\in \mathbb{Z}^{n\times n}$, nonzero vectors
$\rr\in\mathbb{R}^{n}$ and $\b\in\mathbb{Z}^{n}$, and positive number $\alpha $ that satisfy
\eqref{e:ordernorm} and \eqref{e:norm},
an invertible upper triangular $\U$ and full-rank matrix $\L$ such that \eqref{e:LU} holds.

\textbf{Output:} An invertible matrix $\bar{\B}$ and a nonzero vector $\bar{\rr}$
satisfy the requirement of Problem \ref{P:problem},
an invertible upper triangular matrix $\bar{\U}$ and an invertible matrix $\bar{\L}$ such that \eqref{e:LUbar} holds.
\begin{algorithmic}[1]
  \STATE set $n=$length$(\b)$;
  \STATE  $\v=\L\b$;\,\;// use $\v$ to store $\L\b$ (see \eqref{e:Utilde})
  \STATE find $i$ such that $r_{i-1}\leq \alpha<r_{i}$;
  \STATE find $j$ which is the largest $k$ such that $v_k\neq0$;
   \IF{$j< i$}
   \STATE set $\bar{\B}=\B$, $\bar{\rr}=\rr$, $\bar{\L}=\L, \bar{\U}=\U$; 

    \ELSE
    \STATE set $\bar{\L}=\L$;
     \STATE set $\bar{\U}=\bmx \U_{1:i-1}&\v& \U_{i:j-1}&\U_{j+1:n}\emx$ // we denote $\v=\L\b$ (see line 2)
     \STATE  set $\bar{\B}=\bmx \B_{1:i-1}&\b& \B_{i:j-1}&\B_{j+1:n}\emx$
     \STATE set $\bar{\rr}=\bmx \rr_{1:i-1}&\alpha& \rr_{i:j-1}&\rr_{j+1:n}\emx$

\IF{$j>i$}
    \FOR{$k=j:-1:i+1$}
      \IF{$\bar{u}_{k-1,i}=0$}
             \STATE swap the $(k-1)$-th and $k$-th rows of $\bar{\U}$, and $(k-1)$-th and $k$-th rows of $\bar{\L}$
           \ELSE
           \STATE set $t=\bar{u}_{ki}/\bar{u}_{k-1,i}$, $\bar{u}_{ki}=0$;
           \STATE $\bar{\u}_{k,k:n}=\bar{\u}_{k,k:n}-t\bar{\u}_{k-1,k:n}$;
           \STATE $\bar{\l}_{k,1:n}=\bar{\l}_{k,1:n}-t\bar{\l}_{k-1,1:n}$;
           \ENDIF
    \ENDFOR
   \ENDIF
   \ENDIF
\end{algorithmic}
\end{algorithm}

In the following, we analyze the complexity of Alg. \ref{a:updatingC} by counting the number of flops.
Since $\L$ may not be a lower triangular matrix, the first 11 steps cost at most $2n^2$ flops (for
computing $\v$), and the last 12 steps cost
\[
2\sum^{j}_{k=i+1}(2n-k+1.5)\leq 2\sum^{n}_{k=2}(2n-k+1.5)=3n^2-2n-1
\]
flops. Thus, the total complexity of Alg. \ref{a:updatingC} is at most $5n^2-2n-1$ flops.
From the above analysis, we can see that, if $i$ is large or $j$ is small,
the total complexity of Alg. \ref{a:updatingC} is much smaller than $5n^2-2n-1$ flops.
\end{proof}

\begin{example}
\label{ex:updatingC}
Let
\beq
\label{e:RCc}
\bbR=\bmx
2&0&0&0\\
0&2&1&0\\
0&0&2&0\\
0&0&0&2
\emx,\,\;
\B=\bmx
1&0&0&0\\
0&2&0&0\\
0&0&3&0\\
0&0&0&4
\emx,\,\;
\b=\bmx
1\\-1\\1\\0
\emx,
\eeq
$\rr=\bmx2&4&3\sqrt{5}&8\emx^T$ and $\alpha=3$.
Then, \eqref{e:ordernorm} and \eqref{e:norm} hold.
In the following, we show how to obtain $\bar{\B}$, $\bar{\rr}$,
$\bar{\L}$ and $\bar{\U}$ such that $\bar{\L}\bar{\B}=\bar{\U}$.

By \eqref{e:RCc}, one can easily obtain that
\[
\L=\bmx
1&0&0&0\\
0&1&0&0\\
0&0&1&0\\
0&0&0&1
\emx,\,\;
\U=\bmx
1&0&0&0\\
0&2&0&0\\
0&0&3&0\\
0&0&0&4
\emx
\]
such that \eqref{e:LU} holds.
Since $\rr=\bmx2&4&3\sqrt{5}&8\emx^T$ and $\alpha=3$, by line 3 of Alg. \ref{a:updatingC}, $i=2$.
By line 2 of Alg. \ref{a:updatingC} $\v=\bmx1&-1&1&0\emx^T$, thus $j=3>i$
(see line 4 of Alg. \ref{a:updatingC})). Then, by lines 9-11 of Alg. \ref{a:updatingC}, we obtain
\[
\bar{\U}=\bmx
1&1&0&0\\
0&-1&2&0\\
0&1&0&0\\
0&0&0&4
\emx,\,\;
\bar{\B}=\bmx
1&1&0&0\\
0&-1&2&0\\
0&1&0&0\\
0&0&0&4
\emx,
\,\;
\bar{\rr}=\bmx
2\\3\\4\\8
\emx.
\]
Since $j=3>2=i$, by lines 13-21 of Alg. \ref{a:updatingC}, we get
\[
\bar{\L}=\bmx
1&0&0&0\\
0&1&0&0\\
0&1&1&0\\
0&0&0&1
\emx,\,\;
\bar{\U}=\bmx
1&1&0&0\\
0&-1&2&0\\
0&0&2&0\\
0&0&0&4
\emx.
\]
It is easy to check that $\bar{\B}$ and $\bar{\rr}$  satisfy \eqref{e:ordernormBbar}
and \eqref{e:normbar}, and $\bar{\L}\bar{\B}=\bar{\U}$.
\end{example}

\subsection{The improved Schnorr-Euchner search algorithm in \cite{WenC17b}}
\label{ss:SE}
From the first paragraph of Sec. \ref{SS:Preliminary}, our novel algorithm for Problem \ref{P:RSMP} needs to use the integer vectors
obtained by the improved Schnorr-Euchner search algorithm \cite{WenC17b}  to update the suboptimal solution $\B$,
thus we briefly review this algorithm in this subsection.
To better explain this algorithm,
we first introduce the Schnorr-Euchner search algorithm \cite{Sch87} for solving the following
shortest vector problem (SVP)
\beq
\label{e:SVPR}
\b^{\star}=\min_{\b\in\mathbb{Z}^{n}\backslash \{0\}}\|\bbR\b\|_2.
\eeq
More details on this algorithm are referred to  \cite{AgrEVZ02, ChaH08},
and its variants can be found in  \cite{CuiHT13, WenZMC16}.

Suppose that $\b^{\star}$ is within the hyper-ellipsoid defined by
\beq
\label{ine:ellp}
\norm{\bar{\R}\b}_2< \beta,
\eeq
where $\beta$ is a given constant. Let
\beq
\label{e:d}
d_{n}=0, \; d_i=-\frac{1}{\bar{r}_{ii}}\sum_{j=i+1}^{n}\bar{r}_{ij}b_j, \quad i=n-1,\ldots, 1.
\eeq
Then \eqref{ine:ellp} can be transformed to
$$
\sum_{i=1}^{n}\bar{r}_{ii}^2(b_i-d_i)^2< \beta^2
$$
which is equivalent to
\begin{eqnarray}
\label{e:levelk}
\bar{r}_{ii}^2(b_i-d_i)^2< \beta^2-\sum_{j=i+1}^{n}\bar{r}_{jj}^2(b_j-d_j)^2
\end{eqnarray}
for $i=n, n-1,\ldots, 1$, where $i$ is called as the level index,
and we define $\sum_{j=n+1}^{n}\bar{r}_{jj}^2(b_j-d_j)^2:=0$.

The Schnorr-Euchner search algorithm starts with $\beta=\infty$,
and sets $b_i=\lfloor d_i\rceil$ ($d_{i}$ are computed via \eqref{e:d}) for $i=n, n-1,\ldots, 1$.
Clearly, $\b=\0$ is obtained and \eqref{e:levelk} holds.
Since $\b^{\star}\neq \0$, $\b$ should be updated.
To be more specific, $b_1$ is set as the next closest integer to $d_1$.
Since $\beta=\infty$, \eqref{e:levelk} with $i=1$ holds.
Thus, this updated $\b$ is stored and $\beta$ is updated to $\beta=\norm{\R\b}_2$.
Then, the algorithm tries to update the latest found $\b$ by finding a new $\b$ satisfying \eqref{ine:ellp}.
Since \eqref{e:levelk} with $i=1$ is an equality for the current $\b$,
$b_1$ only cannot be updated.
Thus we try to update $b_2$ by setting it as the next closest integer to $d_2$.
If it satisfies \eqref{e:levelk} with $i=2$, we try to  update $b_1$
by setting $b_1 = \lfloor d_1 \rceil$
and then check whether \eqref{e:levelk} with $i=1$ holds or not, and so on;
Otherwise, we try to update $b_3$, and so on. Finally, when we are not able to find a new integer
$\b$ such that \eqref{e:levelk} holds with $i=n$, the search process stops and
outputs the latest $\b$, which is actually $\b^{\star}$ satisfying \eqref{e:SVPR}.

The improved Schnorr-Euchner search algorithm \cite{WenC17b} is a simple modification
of the Schnorr-Euchner search algorithm based on the fact that if $\b^{\star}$
is an optimal solution to \eqref{e:SVPR}, then so is $-\b^{\star}$.
Specifically, the algorithm in \cite{WenC17b} only searches the nonzero integer vectors
$\b$, satisfying $b_{n}\geq 0$ and $b_i\geq 0$ if $\b_{i+1:n}=\0$ ($1\leq i\leq n-1$).
Note that only the former property of $\b$  is exploited in \cite{DinKWZ15},
whereas  this strategy can prune more vectors while retaining optimality.

\subsection{A novel optimal SMP algorithm}
\label{SS:New alg}

In this subsection, we develop a novel and efficient algorithm for SMP on lattice $\mathcal{L}(\H)$.
We begin with designing the algorithm for Problem \ref{P:RSMP}
by incorporating Alg. \ref{a:updatingC} into the Schnorr-Euchner search algorithm.

The proposed algorithm for Problem \ref{P:RSMP} is described as follows:
we start with a suboptimal solution $\B$ which is the $n\times n$ permutation matrix
satisfying \eqref{e:ordernormR} and assume $\beta=\|\bbR\b_{n}\|_2$.
We further assume $\L=\B$, $\U$ is the $n\times n$ identity matrix,
$\rr\in \mathbb{R}^{n}$ with $r_{k}=\|\bbR\b_{k}\|_2$ for $1\leq k\leq n$.
Clearly, $\rr$ satisfies \eqref{e:ordernorm} and \eqref{e:norm}, and \eqref{e:LU} holds
(note that $\B$ is a permutation matrix).
Then we use the improved Schnorr-Euchner search algorithm \cite{WenC17b} to search
nonzero integer vectors $\b$ satisfying \eqref{ine:ellp} to update $\B$.
Specifically,
we denote $\alpha=\|\bbR\b\|_2$ (since $\b$ satisfies \eqref{ine:ellp} and $\beta=\|\bbR\b_{n}\|$,
$\alpha$ satisfies \eqref{e:ordernorm}), use Alg. \ref{a:updatingC} to
update $\B$, $\rr$, $\L$ and $\U$ and set $\beta=r_n$.
After this, we go to the next updating,
i.e., we use the improved Schnorr-Euchner search algorithm to update $\b$ and $\alpha$
(more details on how to update $\b$ is referred to Sec. \ref{ss:SE}),
and then use Alg. \ref{a:updatingC} to update $\B$, $\rr$, $\L$ and $\U$.
Finally, when $\B$ cannot be updated anymore and $\beta$ cannot be decreased anymore
(i.e., when we are not able to find a new value for $b_{n}$
such that \eqref{e:levelk} holds with $k=n$), the search process stops and outputs $\B^{\star}$.

By the above analysis, the proposed algorithm for Problem \ref{P:RSMP}
can be summarized in Alg. \ref{a:RSMP}, where
\begin{eqnarray}
\label{e:sgn}
\mbox{sgn}(x)=
\begin{cases}
1, & x\geq0 \cr
-1, &x<0
 \end{cases}.
\end{eqnarray}

\begin{algorithm}
 \caption{A sphere decoding based algorithm for Problem \ref{P:RSMP}}\label{a:RSMP}
 \textbf{Input:} A nonsingular upper triangular  $\bbR \in \mathbb{R}^{n \times n}$.

 \textbf{Output:} A solution $\B^{\star}$ to Problem \ref{P:RSMP}.

\begin{algorithmic}[1]
 \STATE $n=$length$(u)$;
   \STATE $\b=\0$;
   \STATE $\d = \0$ ($n\times 1$ zero vector); // see \eqref{e:d}
    \STATE $\rr=\0$;
  \FOR{i=1:n}
  \STATE    $r_i=\|\bar{\rr}_i\|_2$; //  $r_i=\|\bbR\b_i\|_2$;
  \ENDFOR
  \STATE $[\rr,\mbox{order}]=\mbox{sort}(\rr)$, $\beta=r_n$;  //initial radius, see \eqref{ine:ellp}
 \STATE  $\B=\I_n$, $\B=\B(:,\mbox{order})$;   //initial  $\B$ which is a permutation matrix and satisfies \eqref{e:ordernormR},
 \STATE $\bar{\L}=\B, \U=\I_n$ // $\bar{\L}, \U$ satisfy $\bar{\L}\B=\I_n$;
   \STATE $\bxi= \0$; // $\xi_i=\sum_{j=i+1}^{n}\bar{r}_{jj}^2(b_j-d_j)^2$ (see \eqref{e:levelk})
   \STATE $\s = \0$; // used to update the entries of $\b$
 \STATE $i = n$;
 \STATE  $\gamma =0$; 

    \WHILE{1}
        \STATE $\eta = \xi_i+\gamma^2$;
        \IF{$\eta<\beta^2$} 
          \IF{$k\neq1$}
           \STATE $i=i-1$;
           \STATE $\xi_i=\eta$;
           \STATE $d_i=-\frac{1}{\bar{r}_{ii}}\sum_{j=i+1}^{n}\bar{r}_{ij}b_j$; // see \eqref{e:d}
           \STATE $b_i=\lfloor d_i\rceil$;
           \STATE $\gamma = \bar{r}_{ii}(d_i-b_i)$;
           \STATE $s_i=\mbox{sgn}(d_i-b_i)$; //(see \eqref{e:sgn})
          \ELSE  
                  \IF{$\eta\neq0$}   
                   \STATE use Alg. \ref{a:updatingC}  to update $\bar{\L}$, $\U$, $\B$ and
                   $\rr$, where $\alpha=\sqrt{\eta}$, set $\beta=r_n$;
                   \STATE set $i=i+1$;
                   \ENDIF

                   \STATE $b_i = b_i+ s_i$;
                   \STATE $\gamma = \bar{r}_{ii}(d_i-b_i)$;
                   \STATE $s_i=-s_i-$sgn$(s_i)$;
          \ENDIF  
        \ELSE  
           \IF{$i=n$}
              \STATE break;
           \ELSE

             \STATE $i=i+1$;
             \IF{$i=n$}
               \STATE $b_i=b_i+1$;
              \ELSE
                 \IF{$\b_{i+1:n}=\0$}
                 \STATE $b_i=b_i+1$\
                 \ELSE
                  \STATE $b_i=b_i+s_i$;
                  \STATE $s_i=-s_i-$sgn$(s_i)$;
                 \ENDIF
             \ENDIF
             \STATE $\gamma = \bar{r}_{ii}(d_i-b_i)$;
           \ENDIF
        \ENDIF
       \ENDWHILE
 \end{algorithmic}
 \end{algorithm}

\begin{remark}
Note that the differences between Alg. \ref{a:RSMP} and the improved Schnorr-Euchner search algorithm
in \cite{WenC17b} are lines 4-10 and line 27
which are for initialization and updating suboptimal solutions $\B$, respectively.
More specifically, lines 4-10 and line 27 should be respectively changed to $\bar{\b}=\0$
(intermediate solution), and $\bar{\b}=\b, \beta=\eta$ for the improved Schnorr-Euchner search algorithm.
\end{remark}

If $\B^{\star}$ is a solution to Problem \ref{P:RSMP},
then $\A^{\star}=\Z\B^{\star}$ is a solution to the SMP on lattice $\mathcal{L}(\H)$,
where $\Z$ is defined in \eqref{e:QRZ},
thus the algorithm for $\A^{\star}$ is described in Alg. \ref{a:SMP}.

\begin{algorithm}[!ht]
\caption{A sphere decoding based algorithm for the SMP on lattice $\mathcal{L}(\H)$}
\textbf{Input:} A nonsingular upper triangular matrix $\R\in \mathbb{R}^{n\times n}$.

\textbf{Output:} A solution $\A^{\star}$ to SMP on lattice $\mathcal{L}(\H)$.

\label{a:SMP}
\begin{enumerate}
\item Perform the QR factorization to $\H$ to get an invertible matrix $\R$ (see \eqref{e:QR}).
\item Perform the LLL reduction to $\R$ to get $\bbR$ and $\Z$ (see \eqref{e:QRZ}).
\item Get $\B^{\star}$ by solving Problem \ref{P:RSMP} with Alg. \ref{a:RSMP}.
\item Set $\A^{\star}:=\Z\B^{\star}$.
\end{enumerate}
\end{algorithm}

\subsection{Optimality of the new algorithm}
\label{SS:Opt}

In this subsection, we show that the new algorithm exactly solves the SMP on lattice $\mathcal{L}(\H)$.
Since $\A^{\star}=\Z\B^{\star}$, it is equivalent to show that Algorithm \ref{a:RSMP}
exactly solves Problem \ref{P:RSMP}. Specifically, we have the following theorem which
shows the optimality of Algorithm \ref{a:RSMP}.

\begin{theorem}
\label{t:optim}
Suppose that $\B^{\star}\in \mathbb{Z}^{n\times n}$ is the full-rank matrix returned by
Algorithm \ref{a:RSMP}, then
\[
\|\bbR \b^{\star}_i\|_2=\lambda_i(\bbR),\quad \quad i=1,2,\ldots, n,
\]
where $\bbR$ is defined in \eqref{e:QRZ}.
\end{theorem}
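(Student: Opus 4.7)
The plan is to realize $\C^{\star}$ as a minimum-weight basis of a linear matroid and extract each successive minimum from the matroid's rank function via the standard greedy characterization. Define $S^{(t)} \subset \mathbb{Z}^{N_t}$ to be the columns of the initial $\C^{(0)}$ together with every nonzero integer vector that the Schnorr--Euchner search has enumerated through step $t$, equip the linear matroid on $S^{(t)}$ (with independence meaning linear independence in $\mathbb{R}^{N_t}$) with the weight function $w(\v) = \|\bbR\v\|_2$, and prove by induction on $t$ that $\C^{(t)}$ is a minimum-weight basis of this matroid. The inductive step amounts to identifying Algorithm \ref{a:removei} with the matroid exchange rule: given the current basis $\C^{(t)}$ and a new element $\v$, the algorithm either leaves the basis unchanged (when $\widetilde{\C}^{(t)}_{[1,k+1]}$ fails to have full column rank, i.e., when $\v$ already lies in the span of the current columns of weight at most $\|\bbR\v\|_2$) or inserts $\v$ at position $k+1$ and removes the column of largest index whose removal preserves invertibility. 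By the sorted-norm ordering of $\widetilde{\C}^{(t)}$ and Theorems \ref{t:iindep}, \ref{t:indep}, and \ref{t:jindep}, that removed column is precisely the heaviest element of the fundamental cycle of $\v$ with respect to $\C^{(t)}$, matching the matroid exchange update exactly.

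Next, I would invoke the completeness of the sphere decoding. At termination, every nonzero integer vector $\v$ in the open ellipsoid $\|\bbR\v\|_2 < \beta^{\star} := \|\bbR\c_{N_t}^{\star}\|_2$ satisfying the pruning conditions $v_{N_t} \geq 0$ and $v_k \geq 0$ whenever $v_{k+1:N_t} = \0$ has been generated. Because the pruning keeps exactly one representative of each pair $\{\v,-\v\}$ and linear independence is unaffected by negation, for every $r < \beta^{\star}$ the rank of $\{e \in S^{(\star)} : w(e) \leq r\}$ in the linear matroid equals the rank of $\{\v \in \mathbb{Z}^{N_t} \setminus \{\0\} : \|\bbR\v\|_2 \leq r\}$, which by the definition of the successive minima is at least $\max\{k : \lambda_k \leq r\}$.

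The standard characterization of the sorted weights of a minimum-weight matroid basis then gives $\|\bbR\c_i^{\star}\|_2 = \min\{r : \rank(S^{(\star)} \cap \{\v : \|\bbR\v\|_2 \leq r\}) \geq i\}$. If $\lambda_i < \beta^{\star}$, the rank computation above shows the set at $r = \lambda_i$ already has rank at least $i$, yielding $\|\bbR\c_i^{\star}\|_2 \leq \lambda_i$; if instead $\lambda_i = \beta^{\star}$, the sorted-column order of $\C^{\star}$ gives $\|\bbR\c_i^{\star}\|_2 \leq \|\bbR\c_{N_t}^{\star}\|_2 = \beta^{\star} = \lambda_i$ directly. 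The matching lower bound $\|\bbR\c_i^{\star}\|_2 \geq \lambda_i$ is immediate since $\c_1^{\star},\ldots,\c_i^{\star}$ are $i$ linearly independent lattice vectors. Combining these inequalities produces $\|\bbR\c_i^{\star}\|_2 = \lambda_i$, as claimed.

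The main obstacle, in my view, is the matroid-exchange identification in the first paragraph, in particular verifying that Algorithm \ref{a:removei} still selects the correct heaviest cycle element when several columns share the same norm, and that the no-op case corresponds exactly to $\v$ lying in the span of $\c_1^{(t)},\ldots,\c_k^{(t)}$ (so that no weight-reducing swap is possible). Once this identification is in place, the remainder of the argument reduces cleanly to the greedy matroid rank characterization applied to $S^{(\star)}$.
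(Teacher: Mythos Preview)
Your proposal is correct and takes a genuinely different route from the paper.

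The paper's proof (in the appendix) tracks, for each index $j$, the first time an integer vector achieving $\|\bbR\c\|_2=\lambda_j$ is enumerated by the search, and argues via a case analysis on whether $\lambda_{j-1}<\lambda_j$ or $\lambda_{j-1}=\lambda_j$ that Algorithm~\ref{a:removei} will insert this vector into the running matrix and never displace it thereafter. Two preparatory lemmas (Lemmas~\ref{l:succmima1} and~\ref{l:succmima2}) are proved specifically to supply the independence facts about short lattice vectors that this tracking argument needs.

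Your matroid formulation bypasses that case analysis entirely. The identification of Algorithm~\ref{a:removei} with the greedy exchange step is exact, not merely heuristic: writing $\c=\sum_k f_k \c_k$, the index $k^*=\max\{k:f_k\neq 0\}$ equals $j-1$ from Theorem~\ref{t:jindep} (since $\widetilde{\C}_{[1,j]}$ first loses full column rank precisely when $\c\in\mathrm{span}(\c_1,\ldots,\c_{j-1})$), and because the columns are norm-sorted, $\c_{k^*}$ is automatically a heaviest element of the fundamental cycle. Your flagged tie concern dissolves under this identification: if $k^*\leq i$ the entire cycle lies in $\{\c,\c_1,\ldots,\c_i\}$, all of weight at most $\|\bbR\c\|_2$, so the no-op is a valid greedy choice; if $k^*>i$ then $\|\bbR\c_{k^*}\|_2\geq\|\bbR\c_{i+1}\|_2>\|\bbR\c\|_2$ strictly, so the swap strictly decreases weight. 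What your approach buys is a single uniform invariant ($\C^{(t)}$ is a minimum-weight basis of the linear matroid on $S^{(t)}$) from which the conclusion follows by the standard greedy rank characterization, with no separate treatment of repeated successive minima; what the paper's approach buys is self-containment, since it does not import the external fact that online greedy exchange maintains a minimum-weight matroid basis.
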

\begin{proof}
Please see Appendix \ref{ss:optim}.
\end{proof}

\section{Complexity Analysis of the new algorithm}
\label{S:compAna}

In this section, we first theoretically show that the memory complexity and the expected time
complexity of our new SMP algorithm  are respectively $\bigO(n^2)$ space
and $\bigO(n^{3/2}(2\pi e\,m/n)^{n/2})$ flops.
Then, we show that our new SMP algorithm is $\Omega(n)$ times faster
than \cite[Alg. 2]{DinKWZ15} whose memory complexity is $\bigO(n^2)$ space,
and explain that it is also faster than \cite[Alg. 1]{FisCS16}
whose memory complexity is exponential in $n$.

\subsection{Complexity analysis of the proposed algorithm}
\label{SS:compAnaA2}

In this subsection, we analyze the space and time complexities of  Alg. \ref{a:SMP}.

We first look at its memory complexity. One can easily see that the space complexities of
both Alg. \ref{a:updatingC} and  Alg. \ref{a:RSMP} are $\bigO(n^2)$ space.
The space complexities of the QR factorization, the LLL reduction and saving $\Z$ (see \eqref{e:QRZ})
are $\bigO(n^2)$ space, thus the total memory complexity of Alg. \ref{a:SMP} is $\bigO(n^2)$ space.

In the following, we investigate the time complexity, in terms of flops, of Alg. \ref{a:SMP}.
Since the complexities of the QR factorization and computing $\A^{\star}=\Z\B^{\star}$,
and the expected complexity of the LLL reduction (when $1/4<\delta<1$) \cite{LinMH13}
are polynomial in $n$, while the complexity of Alg. \ref{a:RSMP} is exponential,
the complexity of Alg. \ref{a:SMP} is dominated by Alg. \ref{a:RSMP}.

In the sequel, we study the complexity of Alg. \ref{a:RSMP}.
From  Alg. \ref{a:RSMP}, one can see that its complexity, denoted by $C(n)$,
consists of two parts: the complexities of finding and updating integer vector
$\b$ satisfying \eqref{ine:ellp},
and updating $\B, \L, \U, \rr$ whenever a  nonzero integer vector $\b$ is obtained
(line 27 of Alg. \ref{a:RSMP}).
Let $C_1(n)$ and $C_2(n)$ respectively denote them, then
\[
C(n)= C_1(n)+C_2(n).
\]

Let $\mu_i(n)$ and $f_i$, $1\leq i\leq n$, respectively denote the number of integer vectors
$\b\in \mathbb{Z}^{n\times n}$ (see \eqref{e:levelk}) searched by the Schnorr-Euchner enumeration algorithm
and the number of flops
that the enumeration performs for searching an integer vector $\b$ in the $i$-th level.
Then by \cite{HasV05}, $f_i=\bigO(n)$ (which can be seen from Alg. \ref{a:RSMP}), and thus
\begin{align*}
C_1(n)=\sum_{i=1}^{n}\mu_i(n)f_i\leq \sum_{i=1}^{n}\mu_1(n)\bigO(n)=\bigO(n^2\mu_1(n)).
\end{align*}

Since the number of times that $\B$ needs to be updated is $\mu_1(n)$, and by the complexity
analysis of Alg. \ref{a:updatingC}, each updating costs $\bigO(n^2)$ flops, so we obtain
\[
C_2(n)=\mu_1(n)\bigO(n^2)=\bigO(n^2\mu_1(n)).
\]

By the aforementioned three equations, we have
\beq
\label{e:comp}
C(n)= \bigO(n^2\mu_1(n)).
\eeq
To compute $C(n)$, we need to know $\mu_1(n)$, but unfortunately,
exactly computing $\mu_1(n)$ is very difficult if it is not impossible.
However, from  \cite{GruW93,BanK98,AgrEVZ02,HasV05},
the expected value of $\mu_1(n)$, i.e., $E[\mu_1(n)]$, is proportional to
\[
\frac{\pi^{n/2}}{\Gamma(n/2+1)}\beta^{n},
\]
where $\beta=\|\bbR \b_n\|_2$ (see step 1 of Alg. \ref{a:RSMP}).
Note that the above strategy has also been employed in \cite{DinKWZ15} to
analyze the complexity of its algorithm.

To compare the time complexity of our proposed algorithm with that of the SMP algorithm
in \cite{DinKWZ15}, we make the same assumption as that in \cite{DinKWZ15} on $\H$,
i.e., assuming that the entries of $\H$ independently and identically follow the standard
Gaussian distribution.
Since the initial $\B$ is an $n\times n$ permutation matrix,
if we do not use the LLL reduction to reduce $\R$ in \eqref{e:QR},
then $E[\|\R\b_i\|_2]=\sqrt{m}$ for $1\leq i\leq n$.
Since the LLL reduction can generally significantly reduce the initial radius,
the expected value of the initial radius of Alg. \ref{a:RSMP} is less
than $\sqrt{n}$. Thus,
\beq
\label{e:NtE}
E[\mu_1(n)]\leq\bigO\left(\frac{(m\pi)^{n/2}}{\Gamma(n/2+1)}\right).
\eeq

By the Stirling's approximation and the fact that $\Gamma(n+1)=n!$ for any positive integers $n$,
we obtain
\[
\frac{(m\pi)^{n/2}}{\Gamma(n/2+1)}\approx \frac{1}{\sqrt{n\pi}}\left(\frac{2\pi e\,m}{n}\right)^{n/2}.
\]
Hence,
\[
E[\mu_1(n)]\lessapprox\bigO\left(\frac{1}{\sqrt{n}}\left(\frac{2\pi e\,m}{n}\right)^{n/2}\right)
\]
which combing with \eqref{e:comp} yields
\beq
\label{e:compE}
E[C(n)]\lessapprox \bigO(n^{3/2}(2\pi e\,m/n)^{n/2}).
\eeq

\subsection{Comparison of the complexity of the proposed method with that of the algorithm in \cite{DinKWZ15}}
\label{SS:compD}

In this subsection, we show that Alg. \ref{a:SMP} is $\Omega(n)$ times faster
than the SMP algorithm in  \cite{DinKWZ15}.

Note that two algorithms, which are respectively for real and complex SMP's,
were proposed in \cite{DinKWZ15}.
In this paper, we only developed an algorithm for the real SMP
since an algorithm for the complex SMP can be similarly designed and a general complex
SMP can be easily converted into an equivalent real SMP.

To better understand the real SMP algorithm in \cite{DinKWZ15}, i.e., \cite[Alg. 2]{DinKWZ15},
we briefly review it here.
It first performs the LLL reduction to $\H$, i.e., finding a unimodular matrix $\Z\in \mathbb{Z}^{n\times n}$
such that $\H\Z$ is LLL reduced. Then it performs QR factorization to $\H\Z$ to get an upper triangular
matrix $\R$ to transforms the SMP on $\mathcal{L}(\H)$ to Problem \ref{P:RSMP}.
Note that these two steps are equivalent to the first two steps of Alg. \ref{a:SMP}.
Then it solves Problem \ref{P:RSMP} to get $\B^{\star}$.
Finally it returns $\A^{\star}=\Z\B^{\star}$, where $\Z$ is defined in \eqref{e:QRZ}.
As in \cite[Alg. 1]{FenSK13}, $\B^{\star}$ is obtained column by column in $n$ iterations. To be more concrete, the solution of the SVP \eqref{e:SVPR}
forms the first column of $\B^{\star}$; for $2\leq k\leq n$,
the integer vector which minimizes $\|\bbR\b\|_2$ over all the integer vectors $\b$
that are independent with the first $k-1$ columns of $\B^{\star}$ forms the $k$-th column
of $\B^{\star}$.
These vectors are obtained by a modified Schnorr-Euchner algorithm \cite{DinKWZ15}.

By the above analysis, one can see that the memory complexity of \cite[Alg. 2]{DinKWZ15} is also
$\bigO(n^{2})$ space. So it has the same memory complexity as Alg. \ref{a:SMP}.

In the following, we compare  their time complexities.
By \cite[eqs. (15) and (18)]{DinKWZ15}, the complexity of \cite[Alg. 2]{DinKWZ15} is bounded by
\beq
\label{e:compDbd}
\bigO\left(n^{4}\frac{(m\pi)^{n/2}}{\Gamma(n/2+1)}\right).
\eeq
While, by \eqref{e:comp} and \eqref{e:NtE}, the complexity of Alg. \ref{a:SMP} is  bounded by
\beq
\label{e:compNbd}
\bigO\left(n^{2}\frac{(m\pi)^{n/2}}{\Gamma(n/2+1)}\right).
\eeq

Since \eqref{e:compDbd} may be a loose bound on the complexity of \cite[Alg. 2]{DinKWZ15},
Alg. \ref{a:SMP} may not be $\bigO(n^2)$ times faster than \cite[Alg. 2]{DinKWZ15}.
Therefore, in the following, we compare their complexities from another point of view.

By the above analysis, \cite[Alg. 2]{DinKWZ15} solves an SVP to get $\b^{\star}_1$
and $(n-1)$ variants of SVP to get $\b^{\star}_k$ for $2\leq k\leq n$.
The complexity of obtaining $\b^{\star}_k$ in \cite{DinKWZ15} is
$\bigO(nk^2)$ times of that of solving an SVP (since independence needs to be checked),
and thus the total complexity is around $\bigO(n^4)$ times of that of solving an SVP.
Since these $(n-1)$ variants of SVP may have different initial radii,
the true complexity may be lower than $\bigO(n^4)$ times of that of solving an SVP with
the largest initial radius $r_{max}$ which is defined as
\beq
\label{e:rmax}
r_{max}=\max_{1\leq i\leq n}\|\bar{\rr}_i\|_2.
\eeq
To get $\b^{\star}_n$, a variant of SVP with the initial radius $r_{max}$ needs to be solved.
Since independence needs to be checked, the complexity of obtaining $\b^{\star}_n$ is
$\bigO(n^3)$ times of that of solving an SVP with the initial radius $r_{max}$.
Therefore, the complexity of \cite[Alg. 2]{DinKWZ15} is $\Omega(n^3)$ times of that of
solving an SVP with the initial radius $r_{max}$.
In contrast, by the analysis in the above subsection, the complexity of Alg. \ref{a:SMP}
is $\bigO(n^2)$ times of that of solving an SVP with the initial radius $r_{max}$.
Hence, the new algorithm is $\Omega(n)$ times faster than
\cite[Alg. 2]{DinKWZ15}.

\subsection{Comparison of the complexity of the proposed method with that of the algorithm in \cite{FisCS16}}
\label{SS:compF}

In this subsection, we compare the complexity of Alg. \ref{a:SMP} with that of the
SMP algorithm in \cite{FisCS16}, i.e., \cite[Alg. 1]{FisCS16}.

To better understand \cite[Alg. 1]{FisCS16}, we briefly review it here.
This algorithm was designed for IF receiver design.
After obtaining an upper triangular matrix $\R$ by the Cholesky factorization,
it performs the LLL reduction to $\R$ (see \eqref{e:QRZ}) to transfer the SMP to Problem \ref{P:RSMP}.
Then it uses a matrix $\M$ to store all the integer vectors $\b$ satisfying
$\|\bbR\b\|_2\leq r_{max}$ in an nondecreasing order according to $\|\bbR\b\|_2$,
where $r_{max}$ is defined in \eqref{e:rmax}.
These $\b$'s are obtained by using the Alg.  ALLCLOSESTPOINTS in \cite{AgrEVZ02}.
Note that, as mentioned in \cite{FisCS16}, apparently linearly dependent vectors
(those multiplied by $-1$ ) are not stored in $\M$.
After this, $\M$ is transformed into a row echelon form by using the Gaussian elimination,
and then the first $n$ independent columns of $\M$ are selected to form $\B^{\star}$.
Finally it returns $\A^{\star}=\Z\B^{\star}$, where $\Z$ is defined in \eqref{e:QRZ}.

As stated in \cite{FisCS16}, the column number of $\M$ can be approximated by
$\xi$, where
\beq
\label{e:eta}
\xi=(\pi r_{max}^2)^n/(n!|\det(\R)|^2),
\eeq
where $r_{max}$ is defined in \eqref{e:rmax}.
Thus, the memory complexity of this algorithm is exponential in $n$.
Hence, it is higher than that of Alg. \ref{a:SMP} whose memory complexity is $\bigO(n^2)$ space.

In the following, we compare their time complexities in terms of flops.
Since $\M$ is obtained by the Alg.  ALLCLOSESTPOINTS in \cite{AgrEVZ02},
let $\zeta_1$ denote the number of nonzero integer vectors searched by Alg.  ALLCLOSESTPOINTS
in \cite{AgrEVZ02}, 
then by the above analysis, the complexity of \cite[Alg. 1]{FisCS16} is dominated by
using the Gaussian elimination to reduce $\M$ into a row echelon form.
Thus, the complexity is around $n^2\zeta_1$ flops (note that $\zeta_1$ is much larger than $n$).

By Alg. \ref{a:RSMP}, its initial radius is $r_{max}$ which is defined in \eqref{e:rmax}.
Different from \cite[Alg. 1]{FisCS16}, which needs to search all the integer vectors $\b$ satisfying
$\|\bbR\b\|_2< r_{max}$,
Alg. \ref{a:RSMP} searches part of them since the radius will become smaller and smaller during
the search process.
Let $\zeta_2$ denote the number of nonzero integer vectors $\b$ searched by Alg. \ref{a:RSMP},
then although we are unable to quantify the gap between $\zeta_2$ and $\zeta_1$,
by the above analysis, $\zeta_2<\zeta_1$.

By the complexity analysis of Alg. \ref{a:updatingC}, the complexity of Alg. \ref{a:SMP}
is at most around $5n^2\zeta_2$ flops. As stated in Sec. \ref{SS:Preliminary}, the true complexity
of Alg. \ref{a:updatingC} can be much less than $5n^2$ flops, thus it is expected that
Alg. \ref{a:SMP} is more efficient than \cite[Alg. 1]{FisCS16}.
Indeed, this is true, for more details, see the simulation results in Sec. \ref{S:Sim}.

\subsection{Comparison of the complexity of the proposed method with that of the Minkowski
reduction algorithm}
\label{SS:compM}

As the Minkowski reduction \cite{Min96} has been used in \cite{SakHV13} and \cite{DinKWZ15} to sub-optimally solve the SMP,
in this subsection, we compare the complexity of Alg. \ref{a:SMP} with that of the Minkowski reduction algorithm in \cite{ZhaQW12}.
Two Minkowski reduction algorithms were proposed in \cite{ZhaQW12}.
Although the second one is faster, their expected asymptotic complexities are the same.
By \cite[eq. (18) and (31)]{ZhaQW12}, the complexity is bounded by
\[
\bigO\left(n^{3}\frac{(m\pi)^{n/2}}{\Gamma(n/2+1)}\right).
\]
By \eqref{e:compNbd}, our new algorithm has smaller bound, so it is expected that
our new algorithm is faster.
Indeed, this is true, for more details, see the simulation results in Sec. \ref{S:Sim}.

By the Minkowski reduction algorithm in \cite{ZhaQW12}, one can see that its memory complexity is $\bigO(n^2)$.
Based on the above analysis, we have the following table which summarizes the optimality,
space complexities and upper bounds on the expected time complexities of ``Mink", ``DKWZ", ``FCS" and ``New Alg."
which respectively denote the Minkowski reduction algorithm in \cite[Alg. M-RED-2]{ZhaQW12}, \cite[Alg. 2]{DinKWZ15},
\cite[Alg. 1]{FisCS16} and our proposed algorithm, where $\xi$ is defined in \eqref{e:eta}.

\begin{table*}[htbp!]
\caption{The optimality, space complexities and upper bounds on the expected time complexities of ``Mink", ``DKWZ", ``FCS" and ``New Alg."}
\centering
\begin{tabular}{||c||c|c|c||}
 \hline
\backslashbox{Method}{} &  optimality& memory complexity  &  time complexity  \\ \hline
Mink \cite[Alg. M-RED-2]{ZhaQW12} & suboptimal & $\bigO(n^2)$  & $\bigO\left(n^{3}\frac{(m\pi)^{n/2}}{\Gamma(n/2+1)}\right)$  \\ \hline
DKWZ \cite[Alg. 2]{DinKWZ15} & optimal & $\bigO(n^2)$  & $\bigO\left(n^{4}\frac{(m\pi)^{n/2}}{\Gamma(n/2+1)}\right)$  \\ \hline
FCS \cite[Alg. 1]{FisCS16} & optimal & $\bigO(n^2\xi)$  & $\bigO(n^2\xi)$  \\ \hline
New Alg. & optimal & $\bigO(n^2)$  & $\bigO\left(n^{2}\frac{(m\pi)^{n/2}}{\Gamma(n/2+1)}\right)$  \\ \hline
\end{tabular}
\label{tb:1}
\end{table*}

\section{Simulation results}
\label{S:Sim}

In this section, we provide numerical results to compare the efficiency and effectiveness of our
proposed algorithm  with those of \cite[Alg. 2]{DinKWZ15}, \cite[Alg. 1]{FisCS16} and the Minkowski
reduction algorithm \cite{ZhaQW12} for solving the SMP on lattice $\mathcal{L}(\H)$ over 1000 samples.
To simplify notation, these algorithms are respectively denoted by ``New Alg.", ``DKWZ", ``FCS" and ``Mink".
We do not compare them with the SMP algorithms
in \cite{WeiC12a} and \cite{MejO13} since they are not the state-of-the-art.

For simplicity, we assume $\H$'s are $n\times n$ matrices for $n=2:2:20$.
For any fixed $n$, we first generate 1000 realizations of $\H$,
whose entries independently and identically follow the standard Gaussian distribution,
to generate 1000 SMP's on $\mathcal{L}(\H)$.
Then, we respectively use ``DKWZ", ``FCS" and ``New Alg." to solve these SMP's.
In the test, we found that it may take several hours to use the Minkowski
reduction algorithm \cite{ZhaQW12} to suboptimally solve the SMP when $n\geq 16$.
Hence, we did not use this algorithm to solve the SMP for $n=16,18, 20$.
Note that the code for this algorithm was provided by the first author of \cite{ZhaQW12},
and the same problem also exists for the HKZ reduction algorithm developed in  \cite{ZhaQW12}
(for more details, please see \cite{WenC15}).

We first compare the solution $\A$'s returned by these four algorithms.
Since the aim of solving the SMP on $n$-dimensional lattice $\mathcal{L}(\H)$ is to get an
$\A\in \mathbb{Z}^{n\times n}$ such that $\|\H\a_i\|$ is as small as possible for $1\leq i\leq n$,
we respectively use $\u^{(1)}, \u^{(2)}, \u^{(3)}, \u^{(4)}\in \mathbb{R}^{n}$ with
$u_i^{(j)}=\|\H\a_i^{(j)}\|, 1\leq i\leq n, 1\leq j\leq 4,$ to denote the lengths
of the lattice vectors $\H\a_i^{(j)}$ for the solutions
$\A^{(1)}, \A^{(2)}, \A^{(3)}, \A^{(4)}$ returned by ``Mink", ``FCS" ``New Alg." and ``DKWZ".
Figure~\ref{fig:rate} shows average $\frac{\|\u^{(1)}-\u^{(4)}\|_2}{\|\u^{(4)}\|_2}$,
$\frac{\|\u^{(2)}-\u^{(4)}\|_2}{\|\u^{(4)}\|_2}$ and $\frac{\|\u^{(3)}-\u^{(4)}\|_2}{\|\u^{(4)}\|_2}$,
which are respectively denoted by ``Mink--DKWZ", ``FCS--DKWZ", and ``New Alg.--DKWZ",
over 1000 realizations versus $n$.

From Figure \ref{fig:rate}, we can see that the average relative differences between the solutions
returned by ``FCS" and ``DKWZ", and ``New Alg." and ``DKWZ" are 0 for $n=2:2:20$, which is because they are optimal SMP algorithms.
Figure \ref{fig:rate} also shows that the average relative differences between the solutions returned by
``Mink" and ``DKWZ" tends to get larger as $n$ becomes larger.
This is because different from the latter, the former is a suboptimal SMP algorithm.

We then compare the complexities of these four algorithms.
Figure \ref{fig:flops1} displays the average numbers of flops taken by the four algorithms.
Figure \ref{fig:flops2} shows the average ratios of the numbers of flops needed by ``Mink", ``DKWZ"
and ``FCS" relative to that of ``New Alg.".
From Figures \ref{fig:flops1} and \ref{fig:flops2}, one can see that the suboptimal algorithm ``Mink"
is faster than ``DKWZ", but it is slower than ``FCS", and ``New Alg." is the most efficient
one among the four algorithms under consideration. .

\begin{figure}[!htbp]
\centering
\includegraphics[width=3.4 in]{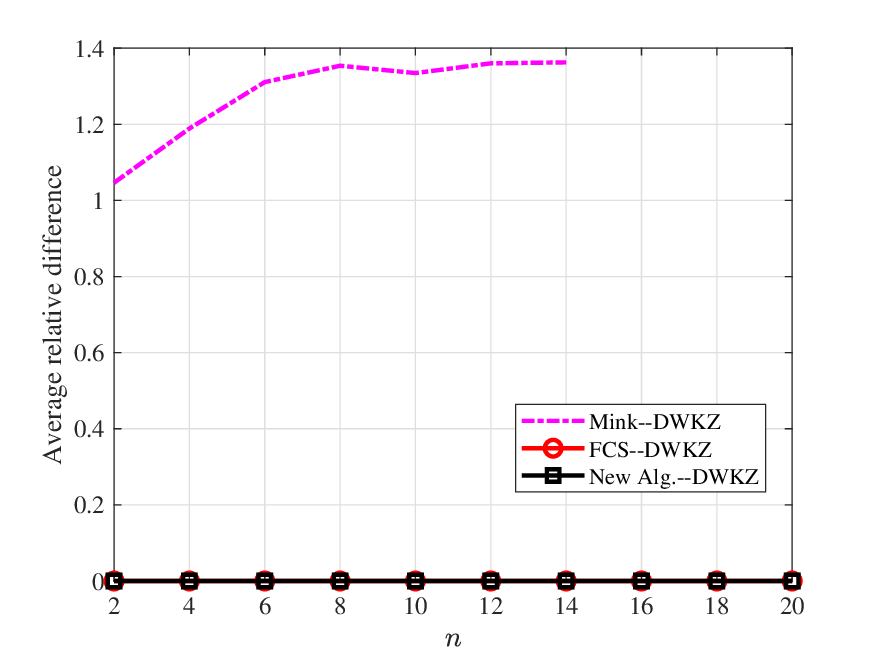}
\caption{Average relative differences over $1000$ realizations} \label{fig:rate}
\end{figure}

\begin{figure}[!htbp]
\centering
\includegraphics[width=3.4 in]{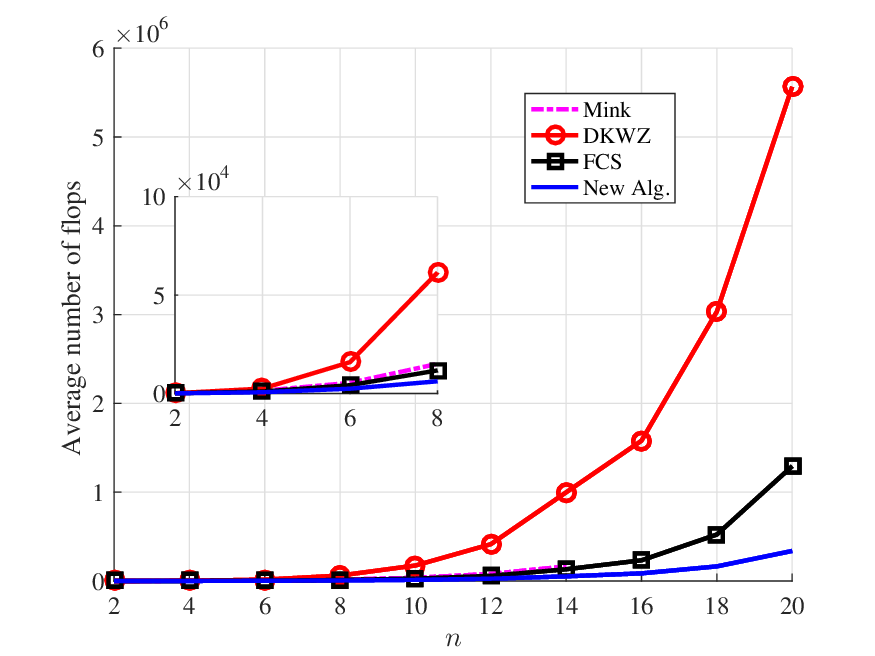}
\caption{Average number of flops over $1000$ realizations} \label{fig:flops1}
\end{figure}

\begin{figure}[!htbp]
\centering
\includegraphics[width=3.4 in]{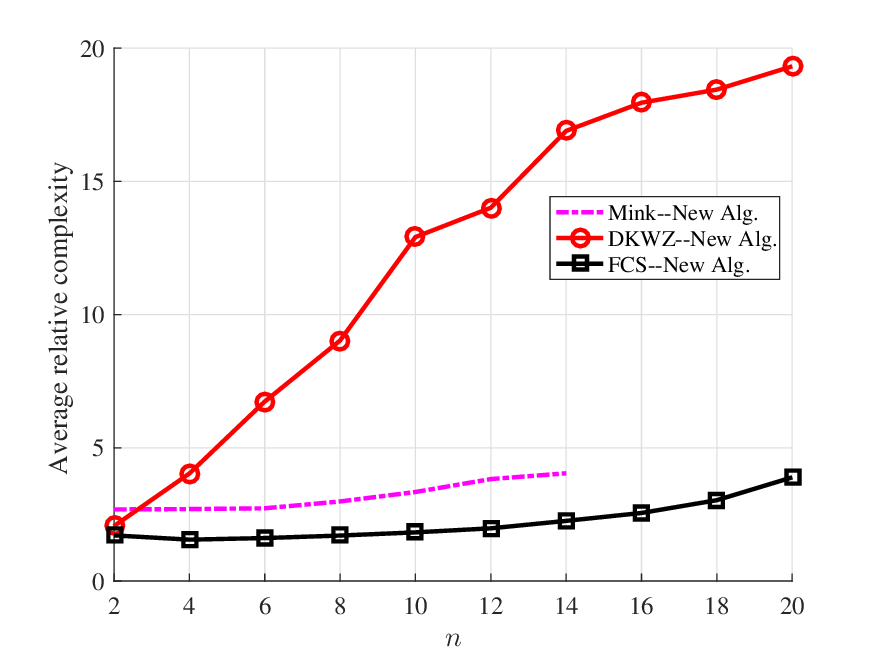}
\caption{Average ratio of flops over $1000$ realizations} \label{fig:flops2}
\end{figure}

\section{Conclusion}
\label{S:Conclusions}
In this paper, we have developed a novel efficient algorithm with an $\bigO(n^2)$
memory complexity for optimally solving an SMP on an $n$-dimensional lattice,
and theoretically showed its optimality.
Theoretical complexity analysis showed that the new algorithm
is $\Omega(n)$ times faster than the most efficient existing optimal algorithm with
polynomial memory complexity.
We have also asserted that it is faster than the most efficient existing algorithm
with exponential memory complexity.
Simulation results have also been provided to illustrate the optimality and efficiency of
the proposed algorithm.

\appendices

\section{Proof of Theorem~\ref{t:optim}}
\label{ss:optim}

To prove Theorem~\ref{t:optim}, we need to introduce the following two lemmas.
We begin with introducing the following Lemma which provides some properties of successive minima.

\begin{lemma}
\label{l:succmima1}
Suppose that there exist linearly independent vectors $\b_1,\ldots, \b_i\in \mathcal{L}(\bbR)=\{\bbR\a|\a
\in \mathbb{Z}^{n}\}$ satisfying $\|\bbR\b_j\|_2=\lambda_{j}$
for $1\leq j\leq i$ with $1\leq i \leq n$.
Then, they are the first $i$ columns of a solution of Problem \ref{P:RSMP}.
\end{lemma}

\begin{proof}
We assume $i<n$, otherwise, the lemma holds naturally.
To prove the lemma, it suffices to show that there exist
$\b_{i+1}, \ldots, \b_{n}\in \mathcal{L}(\bbR)$
such that $\b_1,\ldots, \b_{n}$ are linearly independent and
$\|\bbR\b_j\|_2=\lambda_{j}$ for $i+1\leq j\leq n$.
Let
\begin{align*}
\b_{i+1}=&\argmin_{\b\in S_i}\|\bbR\b\|_2,\\
&\vdots\\
\b_{n}=&\argmin_{\b\in S_{n}}\|\bbR\b\|_2,
\end{align*}
where
\begin{align*}
S_i=&\{\b \in \mathbb{Z}^{n} \mbox{ such that } \b, \b_1,\ldots, \b_{i}
\mbox{ are independent}\},\\
&\quad\quad\quad\quad\quad\quad\quad\quad\vdots\\
S_{n}=&\{\b \in \mathbb{Z}^{n} \mbox{ such that } \b, \b_1,\ldots, \b_{n-1}
\mbox{ are independent}\}.
\end{align*}
Then, by the proof of \cite[Theorem 8]{FenSK13}, one can see that the above
$\b_{i+1}, \ldots, \b_{n}$ satisfy the above requirements.
Note that since the solution of Problem \ref{P:RSMP} exists,
$S_i, \ldots, S_{n}$ are not empty sets. Hence, the lemma holds.
\end{proof}

By Lemma \ref{l:succmima1}, we can get the following useful lemma.
\begin{lemma}
\label{l:succmima2}
Suppose that $\lambda_{i-1}(\bbR)<\lambda_{i}(\bbR)$ for some $2\leq i \leq n$,
and  $\b$ is the $i$-th column of a solution to the Problem \ref{P:RSMP}.
Let $\bar{\b}_1,\ldots, \bar{\b}_k\in \mathcal{L}(\bbR)$ be any linearly independent vectors which
 either satisfy that
$\|\bbR\bar{\b}_j\|_2< \lambda_{i}$ or
$\|\bbR\bar{\b}_j\|_2=\lambda_{i}$ and $\bar{\b}_j$ is not the $i$-th column of any solution of Problem \ref{P:RSMP}, for $1\leq j\leq k$,
then $\bar{\b}_1,\ldots, \bar{\b}_k, \b$  are linearly independent.
\end{lemma}

\begin{proof}
Since $\b$ is the $i$-th column of a solution to Problem \ref{P:RSMP},
there exists $\b_1,\ldots, \b_{i-1}\in \mathbb{Z}^{n}$ such that
$\b_1,\ldots, \b_{i-1}, \b$ are linearly independent
and $\|\bbR\b_j\|_2=\lambda_{j}<\lambda_{i}$ for $1\leq j\leq i-1$.

For any $1\leq j\leq k$, if $\|\bbR\bar{\b}_j\|_2< \lambda_{i}$, then by the
the definition of $\lambda_{i}$ and the assumption that $\lambda_{i-1}<\lambda_{i}$,
$\bar{\b}_j$ is a linear combination of  $\b_1,\ldots, \b_{i-1}$.
Otherwise, $\bar{\b}_j, \b_1,\ldots, \b_{i-1}$ are linearly independent,
then according to $\|\bbR\b_{\ell}\|_2\leq \lambda_{i-1}<\lambda_{i}$ for $1\leq \ell \leq i-1$
and $\|\bbR\bar{\b}_j\|_2< \lambda_{i}$,
one can obtain that $\lambda_{i}\leq \max\{\|\bbR\b_{i-1}\|_2, \|\bbR\bar{\b}_{j}\|_2\}<\lambda_{i}$ which is impossible.
If $\|\bar{\b}_j\|_2=\lambda_{i}$, since it is not the $i$-th column of any solution of Problem \ref{P:RSMP}, by Lemma \ref{l:succmima1}, $\bar{\b}_j$ is a linear combination of  $\b_1,\ldots, \b_{i-1}$.

Since $\bar{\b}_1,\ldots, \bar{\b}_k$  are linearly independent,
if $\bar{\b}_1,\ldots, \bar{\b}_k, \b$  are linearly dependent,
then $\b$ is a linear combination of $\bar{\b}_1,\ldots, \bar{\b}_k$,
which combing with the above analysis imply that
$\b$ is also a linear combination of $\b_1,\ldots, \b_{i-1}$,
contradicting the fact that  $\b_1,\ldots, \b_{i-1},\b$ are linearly independent.
Thus, $\bar{\b}_1,\ldots, \bar{\b}_k, \b$  are linearly independent.
\end{proof}

In the following, we prove Theorem~\ref{t:optim} by using Lemma  \ref{l:succmima2}.

\begin{proof}
By Alg. \ref{a:RSMP}, one can see that all the columns of $\B^{\star}$,
which is a solver of Problem \ref{P:RSMP} and satisfies $b_{nj}^{\star}\geq 0$ and
$b_{ij}^{\star}\geq 0$ if $\b_{i+1:n}^{\star}=\0$ for
$1\leq i\leq n-1$, $1\leq j\leq n$, can be searched by the algorithm.
Thus, to show the theorem, it suffices to show that, during the process of
Alg. \ref{a:RSMP}, for $1\leq j\leq n$,
the first $\b_{j}^{\star}$ (note that there may exist several $\B^{\star}$'s which
are the solutions of Problem \ref{P:RSMP}, and here ``the first $\b_{j}^{\star}$" means the first vector
that obtained by Alg. \ref{a:RSMP} and is the $j$-th column of a solver of Problem \ref{P:RSMP})
will replace a column of $\B$ corresponding to the suboptimal solution when
$\b_{j}^{\star}$ is obtained by Alg. \ref{a:RSMP}
(in the following, we assume this $\b_{j}^{\star}$ is not a column
of the $n\times n$ identity matrix, otherwise we only need to show the next step),
and it will not be replaced by any vector $\b\in \mathbb{Z}^{n}$.

We first show the conclusion holds for $\b_{1}^{\star}$.
Clearly, the matrix $\B$ corresponding to the suboptimal solution when the first $\b_{1}^{\star}$
is obtained satisfies $\|\bbR \b_j\|_2>\|\bbR \b^{\star}_1\|_2$ for $1\leq j\leq n$,
thus, by Remark \ref{co:iindep} and  Alg. \ref{a:RSMP}, the first $\b_{1}^{\star}$ will replace
a column of $\B$.
Moreover, as there is not any vector $\b\in \mathbb{Z}^{n}$ satisfying
$\|\bbR\b\|_2<\|\bbR\b_{1}^{\star}\|_2$, by  Alg. \ref{a:RSMP},
$\b_{1}^{\star}$ will not be replaced by any vector $\b\in \mathbb{Z}^{n}$
which is not the $j$-th column of any solution of Problem \ref{P:RSMP}.

In the following, we show the conclusion holds for the first $\b_{j}^{\star}$ for any
$2\leq j\leq n$ with Lemma \ref{l:succmima2} by considering two cases:
$\lambda_{j-1}<\lambda_{j}$ and $\lambda_{j-1}=\lambda_{j}$.

Suppose that $\lambda_{j-1}<\lambda_{j}$. Let $\B$ be the suboptimal solution when the first $\b_{j}^{\star}$ is obtained.
Since $\B$ is full-rank and $\b_{j}^{\star}$ is the first vector that it is
the $j$-th column of a solution of Problem \ref{P:RSMP},  by Lemma \ref{l:succmima2} and Alg. \ref{a:RSMP}, $\b_{j}^{\star}$ will replace a column of $\B$.
Moreover, by Lemma \ref{l:succmima2}, all the linearly independent vectors $\b\in\mathbb{Z}^{n}$
either satisfy that $\|\bbR\b\|_2<\|\bbR\b_{j}^{\star}\|_2$ or $\|\bbR\b\|_2=\|\bbR\b_{j}^{\star}\|_2$
and $\b$ is not the $i$-th column of any solution to Problem \ref{P:RSMP} are
linearly independent with $\b_j^{\star}$, thus by Alg. \ref{a:RSMP}, $\b_{j}^{\star}$ will not be replaced by any vector $\b\in \mathbb{Z}^{n}$
which is not the $j$-th column of any solution of Problem \ref{P:RSMP}.

Suppose that $\lambda_{j-1}=\lambda_{j}$. We consider two cases: all the first $j-1$ successive minima of lattice $\mathcal{L}(\bbR)$ are equal,
and at least two of the successive minima of lattice $\mathcal{L}(\bbR)$ are different.

We first consider the first case. In this case, $\lambda_{1}=\ldots=\lambda_{i}$.
By the above analysis, Alg. \ref{a:RSMP} can find the first $\b_{1}^{\star}$,
use it to replace the first column of the suboptimal solution corresponding to $\b_{1}^{\star}$,
and will not be replaced by any other vectors.
Since $\lambda_{2}=\lambda_{1}$, there exists at least one $\b\in \mathbb{Z}^{n}$ such that
$\|\bbR\b\|_2=\|\bbR\b_{1}^{\star}\|_2$ and $\bbR\b_{1}^{\star}$ and $\bbR\b$ are linearly
independent. Thus, the first $\b_{2}^{\star}$, will replace the second column of
the suboptimal solution corresponding to this vector.
Since there is not any vector $\b\in \mathbb{Z}^{n}$ satisfying
$\|\bbR\b\|_2<\lambda_{2}=\lambda_{1}$, by Alg. \ref{a:RSMP},
$\b_{2}^{\star}$ will not be replaced by any vector $\b\in \mathbb{Z}^{n}$.
Similarly, one can show that the first $\b_{j}^{\star}$ will replace a column of $\B$
corresponding to the suboptimal solution when it is obtained.
Moreover, since there is not any vector $\b\in \mathbb{Z}^{n}$ satisfying
$\|\bbR\b\|_2<\lambda_{1}=\lambda_{i}$, by Alg. \ref{a:RSMP},
$\b_{j}^{\star}$ will not be replaced by any other vector $\b\in \mathbb{Z}^{n}$.

We then consider the second case, and let (if $j\geq 3$)
$
k=\argmin_{1\leq i \leq j-2}\lambda_{i}<\lambda_{i+1}.
$
By the above proof, one can see that Alg. \ref{a:RSMP} can find
$\b_{1}^{\star},\ldots, \b_{k}^{\star}$,
use them to replace the first $k$ columns of the suboptimal solution corresponding to them iteratively, and will not replace them with any other vectors.
Since $\lambda_{k}<\lambda_{k+1}$, the conclusion holds clearly for $\b_{k+1}^{\star}$.
Similarly and iteratively, one can show that $\b_{j}^{\star}$ will replace a column of $\B$
corresponding to the suboptimal solution when it is obtained,
and it will not be replaced by any vector $\b\in \mathbb{Z}^{n}$ which is not the $j$-th column of any solution of Problem \ref{P:RSMP}.
\end{proof}

\bibliographystyle{IEEEtran}
\bibliography{ref2}

\end{document}